\def\W{\mathscr{W}\!\mathrm{res}}
\def\Tr{\mathrm{Tr}}
\def\PDO{\Psi\mathrm{DO}}
\def\dv{\mathrm{dvol}_g}
\newcommand{\I}{\mathds{1}}
\begin{document}

\preprint{APS/123-QED}

\title{An impediment to torsion from spectral geometry}

\author{Arkadiusz Bochniak}
\email{arkadiusz.bochniak@mpq.mpg.de}
\affiliation{%
Max-Planck-Institut f{\"u}r Quantenoptik,Hans-Kopfermann-Str.~1, Garching, 85748, Germany
}%
\affiliation{%
Munich Center for Quantum Science and Technology, Schellingstraße~4, M{\"u}nchen, 80799, Germany
}%

\author{Ludwik D\k{a}browski}
\email{dabrow@sissa.it}
\affiliation{%
Scuola Internazionale Superiore di Studi Avanzati, Via Bonomea 265, Trieste, 34136, Italy
}%

\author{Andrzej Sitarz}
\email{andrzej.sitarz@uj.edu.pl}
\affiliation{%
Institute of Theoretical Physics, Jagiellonian University, \L{}ojasiewicza 11, Krak\'ow, 30-348, Poland
}%
\author{Pawe\l{} Zalecki}
\email{pawel.zalecki@doctoral.uj.edu.pl}
\affiliation{%
Institute of Theoretical Physics, Jagiellonian University, \L{}ojasiewicza 11, Krak\'ow, 30-348, Poland
}%

\date{\today}% It is always \today, today,
             %  but any date may be explicitly specified

\begin{abstract}
Modifications of standard general relativity that bring torsion into a game have a long-standing history. However, no convincing arguments exist for or against its presence in physically acceptable gravity models. In this Letter, we provide an argument based on spectral geometry (using methods of pseudo-differential calculus) that suggests that the torsion shall be excluded from the consideration. We demonstrate that there is no well-defined functional extending to the torsion-full case of the spectral formulation of the Einstein tensor. 
\end{abstract}

%\keywords{Suggested keywords}%Use showkeys class option if keyword
                              %display desired
\maketitle

%\tableofcontents

{\it Introduction.--} Riemannian and Lorentzian geometries are the fundamental basis of modern theoretical physics, especially in the context of gravity theories. General relativity is a particular example of how geometric concepts influence our understanding of the world around us, starting from mathematical ideas explaining the expansion of the universe and ending with applications that make our everyday lives easier, such as the global positioning system. Differential operators play an important role in the mathematical description of physical problems. It is well-known that their spectra not only encode several characteristics of the system's dynamics but also possess information about certain properties of the underlying geometric structure. The latter idea is the essence of the famous Mark Kac question: {\it Can one hear the shape of a drum?} \cite{Kac}, which can be referred to as the birth of modern spectral geometry, a branch of mathematics that intends to study the geometric properties of objects by analyzing the spectra of some operators associated with them. Spectral methods are nowadays widely used in modern physics; in particular, quantum-mechanical systems are naturally studied by examining the spectrum of the associated Hamiltonian. Later, a groundbreaking idea by Alain Connes was used in his formulation of noncommutative geometry: an approach to geometry where algebraic structures and spectra of certain operators play an essential role \cite{Connes_book, Chamseddine:1996rwL}. This formulation allows us to go beyond classical spaces, including models that could be referred to as {\it quantum spaces} and have potential applications in other branches of physics, e.g. in particle physics \cite{Connes96} or condensed matter \cite{Bellissard}. Despite the rapid development of new methods and the results obtained within this field, several open questions remained unanswered, even in the purely classical framework.      

The main objects of interest in classical spectral geometry are functionals defined for (pseudo-)differential operators on certain vector bundles. More precisely, let $M$ be an even-dimensional ($n\equiv\dim(M)=2m$) oriented closed Riemannian manifold, take $E\xrightarrow{} M$ to be a (complex) vector bundle over $M$ and consider the algebra $\PDO(E)$ of classical pseudo-differential operators $(\PDO)$ \cite{MR1852334} acting on sections of this bundle. As demonstrated by Mariusz~Wodzicki \cite{Wodzicki1987}, this algebra possesses a unique (up to a multiplicative constant) trace, called the {\it Wodzicki residue},
\begin{equation}
  \W(P)=\int_M\Bigl(\int_{|\xi|=1}\!\!\!\!\!\!\!\!\Tr \, \bigl( \sigma_{-n}(P)(x,\xi) \bigr)\Bigr) d^nx.
\end{equation}
Here, $\Tr$ denotes the trace over $\mathrm{End}(E)$ (at a given point), and $\sigma_{-n}(P)$ is the symbol of order $-n$ of $P\in\PDO(E)$. We recall that for a differential operator of order $d$, 
$P = \sum_{|\alpha|=0}^d (-i)^{|\alpha|}a_{\alpha}(x)\partial_x^\alpha$, its symbol (in multi-index notation) reads $\sum_{|\alpha|=0}^d a_\alpha(x) \xi^\alpha$; this notion can be extended into the full algebra of $\PDO(E)$ -- cf. \cite{Gilkey}.

One of the crucial observations made in the early days of this field was that the Wodzicki residue of powers of the Laplace operator reproduces certain geometric quantities. More precisely, it has been demonstrated that
\begin{equation}
\begin{aligned}
    &\W(f\Delta^{-m})\sim \int_M f\dv,  \\
    &\W(f\Delta^{-m+1})\sim \int_M f R(g) \dv, 
\end{aligned}
\end{equation}
i.e., the volume form and the Ricci scalar can be spectrally reconstructed; cf. \cite{Connes96, Kalau_Walze, Kastler}. The natural question is whether it is possible to encode spectrally (in terms of Wodzicki residue) other geometric quantities like metric, Riemann tensor, Ricci tensor, or Einstein tensor, etc. The positive answer for the metric and the Einstein tensors was recently found in \cite{Dabrowski23}. Spectral methods were also used to examine the torsion in \cite{Dabrowski24} providing an intrinsic method to detect its presence in the Dirac operator. In this Letter, we argue that torsion's presence in gravity models hinders obtaining geometric quantities within spectral geometry. 

{\it Einstein functional with torsion--} The natural proposed way of considering the effect of torsion is to study the Einstein functional. To do so, it is convenient to consider $M$ to be a spin manifold and the Dirac operator $D$ to act on the sections of the spinor bundle. The Einstein functional is a bilinear functional on the space differential forms, 
\begin{equation}
\label{eq:Ein}
   \mathscr{G}(u,w) =\W\bigl(u \{D, w \}D |D|^{-n}\bigr).
\end{equation}
and, for the Dirac operator that comes from the torsion-free Levi-Civita connection, we have:
\begin{equation}
   \mathscr{G}(u,w) \sim \int_M \dv  G^{ab} u_a w_b, 
\end{equation}
where $G^{ab}$ is the Einstein tensor \cite{Dabrowski23}. Such functionals, which have densities represented as evaluations of tensors on the forms, we call {\em tensor type functionals}.

Following this logic, we compute the spectral Einstein functional \eqref{eq:Ein} in the presence of torsion. We have $D=D_0+B$ with $D_0$ being the torsionless Dirac operator constructed from the Levi-Civita connection lifted to the spinor bundle, $D_0=i \gamma^a\nabla_{a}^{(s)}$, and $B=-\frac{i}{8}T_{abc}\gamma^a\gamma^b\gamma^c$ the $3$-form perturbation, with $\gamma^i$ Clifford generators and $T$ the (anti-symmetric part of) torsion. The square of such Dirac operator is a Laplace-type operator $L$, that is, its symbol, at any given point $x$ on the manifold, is given by 
$\sigma(L)=\mathfrak{a}_2+\mathfrak{a}_1+\mathfrak{a}_0$, 
where, in normal coordinates at this point, the respective homogeneous symbols are,
\begin{equation}
	\begin{split}
		\mathfrak{a}_2&= \bigl(\delta_{ab}+\frac{1}{3}R_{acbd}x^c x^d \bigr)\xi_a\xi_b + o({\bf x}^2),\\
		\mathfrak{a}_1&=i(P_{ab}x^b + S_a)\xi_a +o({\bf x}),\\
		\mathfrak{a}_0&=Q+o({\bf 1}),
	\end{split} \label{symbolL}
\end{equation}
where $R_{acbd}$ is the Riemann tensor and $P_{ab}$, $S_a$, and $Q$ are tensors valued in endomorphisms of the vector bundle, which are evaluated at the given point on the manifold. Note that the usual Laplace operator acting on functions corresponds to $P_{ab} =  \frac{2}{3} \mathrm{Ric}_{ab}$, where $\mathrm{Ric}$ the Ricci tensor
and $S_a = Q = 0$. For a second-order differential operator $O$, with a symbol expressed in normal coordinates around $x$, $\sigma(O)=F^{ab}\xi_a\xi_b + i G^a \xi_a +H + o({\bf 1})$, where $F^{ab}=F^{ba}$, $G^a$ and $H$ are  endomorphisms of the fibre at point $x$, we have 
	\begin{equation}
		\begin{aligned}
			\W(OL^{-m})&=\frac{\nu_{n-1}}{24}\int_M \dv\Tr\,\bigl[24H +12 G^aS_a \\[-1mm]
			& +F^{aa}(-12 Q+ 6P_{bb}-2 R  - 3 S_b S_b) \\
			& + 2 F^{ab}(-6 P_{ab}+2 \mathrm{Ric}_{ab}-3S_a S_b)\bigr],
		\end{aligned}
	\end{equation}
where $R$ is the scalar curvature; see \cite{supp} for the detailed derivation. This fact, together with the knowledge of symbols of $L^{-k}$, allows us to compute the density of the Einstein functional \footnote{We recently became aware of a structurally similar result in \cite{hong24}, however exact values of the coefficients presented therein are inconsistent with our result} using the standard techniques of the pseudo-differential calculus (see e.g. \cite{Gilkey}). The resulting expression is of the form $\mathscr{G}_{D_0}(u,w)(x)+\delta\mathscr{G}(u,w)(x)$ with $\mathscr{G}_{D_0}$ being the standard density of the Einstein functional for the torsionless case and 
\begin{equation}
 		\begin{aligned}
 			\delta\mathscr{G}(u,w)(x)=&\frac{\nu_{n-1}}{2} %\int_M {\bmag dvol_g \emag} 
 			\Tr\Big\{ iu_aw_{bc}[\gamma^a,\gamma^b]\{\gamma^c, B_0\}\\
            &+u_a w_{b}\Bigl[ \frac{i}{2} [\gamma^a,\gamma^b]\{\gamma^c, B_c\}\\
      &\hspace{-60pt} + (\delta^{ab}B_0-\gamma^a\{\gamma^b,B_0\})(\{\gamma^c,B_0\}\gamma^c-2B_0)\Bigr]\Big\},
 		\end{aligned}
 	\end{equation}
where the expansions in normal coordinates were used: $B=B_0+B_ax^a+o({\bf x})$, $w=(w_a +w_{ab}x^b+w_{abc} x^bx^c)\gamma^a +o({\bf x^2})$, etc. The above expression has, however, an important disadvantage: it cannot represent a well-defined {\it functional of a tensor type} since there is an explicit dependence also on the derivatives of $w$ unless $\Tr\bigl([\gamma^a, \gamma^b]\{\gamma^c, B_0\}\bigr)$ vanishes. This is, however, not possible for the endomorphism $B$ representing a torsion $T$ unless the torsion itself is zero. One can then argue that perhaps there is a potential modification of the spectral Einstein functional that does not include derivatives and reduces to the classical result in the torsionless case. In this Letter, we argue that no such coherent modification is possible.

{\it Modified Einstein functionals--} By power-counting argument for pseudo-differential calculus, we argue that potential modifications of the spectral Einstein functional are of the form $\mathscr{G}'=\sum_{k\in \mathbb{Z}}\bigl(\alpha_k F_1(k) +\beta_k F_2(k)\bigr)$ with (only finitely many nonzero) $\alpha_k, \beta_k\in \mathbb{C}$, and 
\begin{equation}
    \begin{aligned}
        F_1(k)&=\W\bigl(u |D|^k w |D|^{-n-k+2}\bigr),\\
        F_2(k)&=\W\bigl(uD |D|^k w D|D|^{-n-k}\bigr).
    \end{aligned}
\end{equation}
The previously defined Einstein functional is reconstructed with $F_1(0)+F_2(0)$. To start analyzing the problem, we begin with the expression for $F_1$. The main step in this computation is to find the expressions for the symbol of $[|D|^k,w_b \gamma^b]$ -- for details, see \cite{supp}. 
We then analyze the Wodzicki residue of $u |D|^k w |D|^{-n-k+2}$, taking into account only those terms that depend on the derivatives of $w$ or are linear in $w$ but contain contractions of the Riemann tensor. As a result, we get \cite{supp} 
that, up to an irrelevant global constant,
\begin{equation}
\begin{aligned}
 F_1(k)=& \frac{2-n}{24}\Tr(\I)u_aw_a R \\
  &-\frac{k(n+k-2)}{n}\Tr(\I)u_a w_{acc}\\
 &-i\frac{k(k+n-2)}{2n} u_a w_{bc} \Tr\bigl([\gamma^a,\gamma^b]\{\gamma^c,B_0\}\bigr) 
 \\ &+ (\ldots),
\end{aligned}
\end{equation}
where $(\ldots)$ represents all the remaining terms not pertinent to our analysis. Similarly, by analyzing the symbol of $D|D|^k$, we deduce (see \cite{supp}) that 
\begin{equation}
    \begin{aligned}
        F_2(k)=&\Bigl[\frac{n-2}{24}u_aw_aR +\frac{1}{6}u_aw_a G_{ab} \\
        &\hspace{-20pt}+\frac{k(n+k)}{n+2}u_aw_{acc}
        -\frac{4k(n+k)}{n(n+2)}u_a w_{bab}\Bigr] \Tr(\I)\\
        & \hspace{-20pt}+i\Bigl(\frac{k(n+k)}{2n}+\frac{1}{2}\Bigr)u_a w_{bc}\Tr\Bigl([\gamma^a,\gamma^b]\{\gamma^c, B_0\}\Bigr)\\
        &\hspace{-20pt} +(\ldots).
    \end{aligned}
\end{equation}
Let us now consider the modified spectral Einstein functionals $\mathscr{G}'$ and the necessary and sufficient conditions that
the result does not depend on the derivatives of $w$. First, observe that
to eliminate mixed second-order derivatives of $w$ in $\sum_k \beta_k F_2(k)$ we must have $\sum_k \beta_k k(n+k)=0$. Next, we have to independently eliminate the second-order derivatives in $\sum_k \alpha_k F_1(k)$, which leads to $\sum_k \alpha_k k(n+k-2)=0$.

For terms with derivatives of the first order, we easily deduce that $\sum_k \beta_k=0$ or $\Tr\bigl([\gamma^a,\gamma^b]\{\gamma^c, B_0\}\bigr)$ must vanish identically. However, the former is not acceptable because not only would first-order derivatives be eliminated but also the standard Einstein tensor (even in the torsionless case) would have to vanish. Therefore, the only physically acceptable solution is that $\Tr\bigl([\gamma^a,\gamma^b]\{\gamma^c, B_0\}\bigr)$ vanishes. This is impossible for the case where the perturbation is a $3$-form, which is the case with antisymmetric torsion \footnote{We remark that allowed perturbations of the Dirac operator
for which the term vanishes identically include gauge perturbations,
where $B$ is a one-form.}.

{\it Conclusions --} We have demonstrated that spectral geometry imposes a strong obstruction on the presence of torsion in geometric models: for the spectral Einstein functional to be well defined, the torsion $T$ must vanish \footnote{More precisely, this shows that the antisymmetric part of the torsion has to vanish. The vectorial one also must be trivial due to the self-adjointness of the Dirac operator. Cartan torsion can potentially be allowed as being completely transparent in this approach}. Despite the potentially compelling possibilities of analyzing gravity models with torsion, describing the twisting of the reference frame along geodesics, the structural arguments based on spectral geometry provide a strong indication that physically acceptable models should be based on torsionless connections, so the widely used assumption of describing physical systems based on the Levi-Civita connection seems to be, from the spectral perspective, more fundamental than it was supposed so far. We speculate that spectral geometry methods could lead to further structural obstructions for building blocks of modified gravity models. We postpone this intriguing possibility for future research.  

{\it Acknowledgements--} The work of A.B. was partially funded by the Deutsche Forschungsgemeinschaft (DFG, German Research Foundation) under Germany’s Excellence Strategy - EXC-2111 – 390814868, and supported by the University of Warsaw Thematic Research Programme "Quantum Symmetries".  A.B. is also supported by the Alexander von Humboldt Foundation. A.S. is supported by the Polish National Science Centre grant 2020/37/B/ST1/01540. L.D. is affiliated with GNFM–INDAM (Istituto Nazionale di Alta Matematica) and acknowledges that this research is part of the EU Staff Exchange project 101086394 "Operator Algebras That One Can See". 

\bibliography{apssamp}
\end{document}

% --- supplement: supp.tex ---

\preprint{APS/123-QED}

\title{Supplemental Material for ``An impediment to torsion from spectral geometry''}% Force line breaks with \\

\author{Arkadiusz Bochniak}
\email{arkadiusz.bochniak@mpq.mpg.de}
\affiliation{%
Max-Planck-Institut f{\"u}r Quantenoptik,Hans-Kopfermann-Str.~1, Garching, 85748, Germany
}%
\affiliation{%
Munich Center for Quantum Science and Technology, Schellingstraße~4, M{\"u}nchen, 80799, Germany
}%

\author{Ludwik D\k{a}browski}
\email{dabrow@sissa.it}
\affiliation{%
Scuola Internazionale Superiore di Studi Avanzati, Via Bonomea 265, Trieste, 34136, Italy
}%

\author{Andrzej Sitarz}
\email{andrzej.sitarz@uj.edu.pl}
\affiliation{%
Institute of Theoretical Physics, Jagiellonian University, \L{}ojasiewicza 11, Krak\'ow, 30-348, Poland
}%

\author{Pawe\l{} Zalecki}
\email{pawel.zalecki@doctoral.uj.edu.pl}
\affiliation{%
Institute of Theoretical Physics, Jagiellonian University, \L{}ojasiewicza 11, Krak\'ow, 30-348, Poland
}%

\date{\today}

\maketitle

In this Supplemental Material, we provide detailed computations and technical 
parts of the proofs, which were briefly sketched in the main text of our Letter. 
First, we provide the exact formula for certain spectral
functionals formed using the Laplace-type operators. We use the result to 
compute Einstein functional for the Dirac operator perturbed by a zero-order
term. Finally, we demonstrate that there is no coherent modification of the Einstein functional that would be fully tensorial and provides the Einstein tensor.

\section{Spectral functionals for Laplace-type operators}

We begin by considering $S$, a module of smooth sections of a vector bundle (with a fiber $V$) over an $n=2m$-dimensional manifold, and $L: S \to S$, a Laplace-type operator on $S$. Our notation is as in \cite{Dabrowski23}, and we use normal coordinates at a given point $x$ of the manifold and expand homogeneous
symbols as polynomials in normal coordinates up to a given order around $x$. The following results extend the previous ones in \cite{Dabrowski23}.
%%%%%%%%%%%%%%%%%%%%%%%%%%%%%%%%%%%%%%%%%%%%%%%%%%%%%%%
The definition of the Laplace-type operator is already provided in the main text. Here we recall it for completeness:
\begin{definition}\label{defi1}
Let $M$ be a closed Riemannian manifold with a fixed metric and a given vector bundle $V$.  We define $L$ to be a Laplace-type operator 
acting on smooth sections of this bundle, if its symbol, at any given point on the manifold, is given by 
$\sigma(L)=\mathfrak{a}_2+\mathfrak{a}_1+\mathfrak{a}_0$, 
where, in normal coordinates at this point, the respective homogeneous symbols are,
\begin{equation}
	\begin{split}
		\mathfrak{a}_2&= \bigl(\delta_{ab}+\frac{1}{3}R_{acbd}x^c x^d \bigr)\xi_a\xi_b + o({\bf x}^2),\\
		\mathfrak{a}_1&=i(P_{ab}x^b + S_a)\xi_a +o({\bf x}),\\
		\mathfrak{a}_0&=Q+o({\bf 1}),
	\end{split} \label{symbolL}
\end{equation}
where  $R_{acbd}$ is the Riemann tensor and $P_{ab}$, $S_a$, and $Q$ are tensors valued in endomorphisms of the vector bundle, which are
evaluated at the given point on the manifold.
\end{definition}
%%%%%%%%%%%%%%%%%%%%%%%%%%%%%%%%%%%%%%%%%%%%%%%%%%%%%%%
For the symbol of $L^{-m}$ multiplied by an auxiliary operator $O$, we have
\begin{proposition}
\label{prop:main}
For a second-order differential operator $O$, with a symbol expressed in normal coordinates  
around  a given point $x$ on $M$,
\begin{equation}
    \sigma(O)=F^{ab}\xi_a\xi_b + i G^a \xi_a +H + o({\bf 1}),
\end{equation}
where $F^{ab}=F^{ba}$, $G^a$ and $H$ are  endomorphisms of the fibre $V$ at point $x$,  we have 
	\begin{equation}
		\begin{aligned}
			\W(OL^{-m})\!=\!\frac{\nu_{n-1}}{24}\!\int_M \!\!\dv\Tr\,\bigl[24H &+12 G^aS_a +F^{aa}(-12 Q+ 6P_{bb}-2 R  - 3 S_b S_b) \\
			& + 2 F^{ab}(-6 P_{ab}+2 \mathrm{Ric}_{ab}-3S_a S_b)\bigr],
		\end{aligned}\label{general_wres_uDwD}
	\end{equation}
where $R$ is the scalar curvature and $\mathrm{Ric}$ the Ricci tensor.
\end{proposition}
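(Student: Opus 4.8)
The plan is to compute $\W(OL^{-m})$ directly from its defining expression as the integral over the cosphere bundle of the fibre trace of the homogeneous component of order $-n$ of the symbol $\sigma(OL^{-m})$, followed by integration over $M$ against $\dv$. Since $O$ has order $2$ and $L^{-m}$ has order $-n$, the product has order $-(n-2)$, and the piece of order exactly $-n$ — the only one seen by $\W$ — arises two steps down in the symbol-product expansion $\sigma(OL^{-m})\sim\sum_\alpha \frac{1}{\alpha!}\partial_\xi^\alpha\sigma(O)\,D_x^\alpha\sigma(L^{-m})$. First I would enumerate the finitely many contributions: writing $\sigma(L^{-m})=\sum_{k\geq0}c_{-n-k}$ with $c_{-n-k}$ homogeneous of degree $-n-k$ in $\xi$, the order-$(-n)$ part is
\begin{align*}
\sigma_{-n}(OL^{-m}) &= F^{ab}\xi_a\xi_b\,c_{-n-2} + \partial_{\xi_c}(F^{ab}\xi_a\xi_b)\,D_{x_c}c_{-n-1} + \tfrac12\partial_{\xi_c}\partial_{\xi_d}(F^{ab}\xi_a\xi_b)\,D_{x_c}D_{x_d}c_{-n}\\
&\quad + iG^a\xi_a\,c_{-n-1} + \partial_{\xi_c}(iG^a\xi_a)\,D_{x_c}c_{-n} + H\,c_{-n},
\end{align*}
all evaluated at the centre $x=0$ of normal coordinates \emph{after} the indicated $x$-derivatives are taken. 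This identifies exactly what is needed: $c_{-n}$ to second order in ${\bf x}$, $c_{-n-1}$ to first order, and $c_{-n-2}$ at the point.

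The core computation is the parametrix expansion for $L^{-m}$ in normal coordinates, extending the formulas of \cite{Dabrowski23}. I would obtain $c_{-2},c_{-3},c_{-4},\dots$ for $L^{-1}$ from the standard recursion forcing $\sigma(L)\circ\sigma(L^{-1})=\I$, with leading term $c_{-2}=\bigl(|\xi|^2+\tfrac13 R_{acbd}x^cx^d\xi_a\xi_b\bigr)^{-1}\I$, reducing to $|\xi|^{-2}\I$ at $x=0$; here one must retain the full $x$-dependence of $\mathfrak{a}_2$ because the subsequent $D_x$-derivatives pull down the curvature tensor $R_{acbd}$, while the terms $P_{ab}x^b+S_a$ and $Q$ enter $c_{-3}$ and $c_{-4}$. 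Raising to the $m$-th power (or, equivalently, building the symbol of $L^{-m}$ from the resolvent via a Cauchy integral in $\lambda$) then yields the homogeneous components $c_{-n},c_{-n-1},c_{-n-2}$ with their required Taylor coefficients in ${\bf x}$, expressed through $R_{acbd}$, $P_{ab}$, $S_a$, $Q$ and powers of $|\xi|^{-2}$.

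With these in hand, I would substitute into the displayed expression, carry out the angular integration over $|\xi|=1$, and reduce using the monomial sphere integrals — odd monomials vanish, while $\int_{|\xi|=1}\xi_a\xi_b\,dS \propto \nu_{n-1}\delta_{ab}$ and $\int_{|\xi|=1}\xi_a\xi_b\xi_c\xi_d\,dS \propto \nu_{n-1}(\delta_{ab}\delta_{cd}+\delta_{ac}\delta_{bd}+\delta_{ad}\delta_{bc})$ — and then contract the resulting Kronecker deltas, using $F^{ab}=F^{ba}$, the Riemann symmetries, $\mathrm{Ric}_{ab}=R_{acbc}$ and $R=\mathrm{Ric}_{aa}$, to collapse the curvature traces into $R$ and $\mathrm{Ric}_{ab}$. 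Taking the fibre trace $\Tr$ and integrating over $M$ against $\dv$ produces (\ref{general_wres_uDwD}); in particular the leading term $H\,c_{-n}(0,\xi)=H|\xi|^{-n}$ gives the $24H$ contribution, while $F$ and $G$ generate the curvature- and $P,S,Q$-dependent pieces.

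The hardest part will be the bookkeeping in the parametrix step: getting $c_{-n-1}$ and especially $c_{-n-2}$ correct to the required order in ${\bf x}$, tracking the noncommutativity of the endomorphism-valued coefficients and the curvature terms generated by $D_x$ acting on the $x$-quadratic part of $\mathfrak{a}_2$, so that the numerical coefficients ($12$, $6$, $2$, $3$) and the precise combinations $6P_{bb}-2R-3S_bS_b$ and $-6P_{ab}+2\mathrm{Ric}_{ab}-3S_aS_b$ emerge. Matching the combinatorial factors coming from the sphere integrals against those coming from the symbol recursion is where errors are most likely, and I would cross-check by specializing to $O=\I$ (recovering $\W(L^{-m})=\nu_{n-1}\int_M\dv\,\Tr\,\I$) and to the cases already treated in \cite{Dabrowski23}.
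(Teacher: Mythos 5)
Your proposal is correct and takes essentially the same route as the paper's proof: the paper likewise first derives the homogeneous symbols of $L^{-k}$ in normal coordinates (computing $\sigma(L^{-1})$ from the standard inversion recursion and iterating to get the $\mathfrak{c}_{2m}$, $\mathfrak{c}_{2m+1}$, $\mathfrak{c}_{2m+2}$ terms to the required orders in ${\bf x}$), and then evaluates precisely your six order-$(-n)$ contributions from the composition formula as cosphere integrals, summing them to obtain \eqref{general_wres_uDwD}. Your enumeration of terms, the required Taylor orders of each $c_{-n-j}$, and the $O=\I$ cross-check all match the paper's argument, with only the explicit bookkeeping left unexecuted.
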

\begin{proof}
We first compute the symbols of arbitrary powers of $L$ (cf. \cite[Proposition~2.2]{Dabrowski23}). For each $k\in \mathbb{N}$, the symbols of $L^{-k}$ at a given point on the manifold are, $\sigma(L^{-k})=\mathfrak{c}_{2k}+\mathfrak{c}_{2k+1}+\mathfrak{c}_{2k+2} + \cdots$, with homogeneous symbols, 
\begin{equation}
\begin{aligned}
	&\mathfrak{c}_{2k}=\|\xi\|^{-2k-2}\left(\delta_{ab}-\frac{k}{3}R_{acbd}x^c x^d\right)\xi_a\xi_b+ o({\bf x}^2),\\
	&\mathfrak{c}_{2k+1}=-ik\|\xi\|^{-2k-2}(P_{ab}x^b +S_a)\xi_a, \\
	&\mathfrak{c}_{2k+2}=-k Q\|\xi\|^{-2k-2}+k(k+1)\|\xi\|^{-2k-4}\left(P_{ab}-\frac{1}{2}S_aS_b-\frac{1}{3}\mathrm{Ric}_{ab}\right)\xi_a\xi_b +o({\bf 1}).
\end{aligned}	\label{symbolsLk}
\end{equation}
Indeed, let us see that this holds for $k=1$. A simple computation gives the symbols of $L^{-1}$, 
\begin{equation}
\begin{aligned}
&\mathfrak{c}_2=\|\xi\|^{-4}\left(\delta_{ab}-\frac{1}{3}R_{acbd}x^c x^d\right)\xi_a\xi_b+ o({\bf x}^2), \\
&\mathfrak{c}_3=-i\|\xi\|^{-4}(P_{ab}x^b +S_a)\xi_a + o({\bf x}), \\
&\mathfrak{c}_4=-Q\|\xi\|^{-4} +2 \|\xi\|^{-6}\left(P_{ab}-\frac{1}{2}S_a S_b -\frac{1}{3}\mathrm{Ric}_{ab}\right)\xi_a\xi_b +o({\bf 1}),
\end{aligned} \label{prop:symb_c}
\end{equation}
which is a direct consequence of the general formulas for symbols of the inverse operator; cf. \cite{Gilkey}. Iterating this procedure (see \cite[Lemma~A.1]{Dabrowski23}) leads to the claimed result.

The final result comes from a series of computations of the relevant symbols, integrals of the cosphere, and following identities that hold at $x$:
\begin{align}
	& \int_{||\xi||=1} H\mathfrak c_{2m}\ = \ \nu_{n-1}H+o({\bf 1}),\\
	& \int_{||\xi||=1} iG^a\xi_a\mathfrak c_{2m+1}=\frac12\nu_{n-1}G^aS_a+o({\bf 1}),\\
	& \int_{||\xi||=1} G^a\partial_a\mathfrak c_{2m}\ = \ o({\bf 1}),\\
	& \int_{||\xi||=1} F^{ab}\xi_a\xi_b\mathfrak c_{2m+2}\ =\ \frac{1}{24}\nu_{n-1}\Bigl[F^{aa}(-12Q+6P_{bb}-2R-3S_bS_b)\nonumber\\
	& \phantom{xxxxxxxxxxxxxxxxxxx}
	+2F^{ab}(6P_{ab}-2\mathrm{Ric}_{ab}-3S_aS_b)\Bigr]+o({\bf 1}),\\
	&\int_{||\xi||=1} -i F^{ab}\xi_a\partial_b\mathfrak c_{2m+1}\ =\ -\frac{\nu_{n-1}}{2}F^{ab}P_{ab}+o({\bf 1}),\\
	&\int_{||\xi||=1} - F^{ab}\partial_{ab}\mathfrak c_{2m}\ =\  \frac{\nu_{n-1}}{3}F^{ab}\mathrm{Ric}_{ab}+o({\bf 1}).
\end{align}
Adding these partial results gives us the result presented.
\end{proof}
\begin{corollary} \label{lemma:E}
For a $C^\infty(M)$--endomorphism $E:S \to S$ we have:
	\begin{equation}
 \label{eq:withE}
		\W(EL^{-m+1})=\frac{n-2}{24}\nu_{n-1}\int_M \dv \Tr\left[E\left(-12Q+6P_{aa}-2R-3S_aS_a\right)\right].
	\end{equation}
\end{corollary}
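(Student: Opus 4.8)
The plan is to reduce the statement to Proposition~\ref{prop:main} by absorbing one power of $L$ into the auxiliary operator. Since $L^{-m+1}=L\,L^{-m}$, we may write $EL^{-m+1}=(EL)L^{-m}$, so that $\W(EL^{-m+1})=\W(OL^{-m})$ with $O:=EL$. As $E$ is a zeroth-order endomorphism and $L$ is Laplace-type, $O$ is a genuine second-order differential operator, and Proposition~\ref{prop:main} applies once we identify its symbol data $F^{ab},G^a,H$ at the base point $x$.

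The first real step is that identification. Because $E$ is a $C^\infty(M)$--endomorphism, its symbol $E(\mathbf{x})$ is independent of $\xi$, so the symbol product carries no $\partial_\xi$-corrections and $\sigma(EL)=E(\mathbf{x})\,\sigma(L)$ exactly. Evaluating at $x$ in normal coordinates, where $\sigma(L)=\|\xi\|^2+iS_a\xi_a+Q$ (the curvature and $P_{ab}x^b$ tails vanish at $\mathbf{x}=\mathbf 0$), I read off $F^{ab}=E\delta_{ab}$, $G^a=ES_a$, and $H=EQ$. These are symmetric in $a,b$ as required, so the hypotheses of Proposition~\ref{prop:main} are met.

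With these values I would substitute directly into \eqref{general_wres_uDwD}, using $F^{aa}=nE$ and $F^{ab}=E\delta_{ab}$ (hence $F^{ab}P_{ab}=EP_{aa}$, $F^{ab}\mathrm{Ric}_{ab}=E\,\mathrm{Ric}_{aa}=ER$, $F^{ab}S_aS_b=ES_aS_a$), together with $G^aS_a=ES_aS_a$ and $24H=24EQ$. Collecting the four groups of terms proportional to $Q$, $P_{aa}$, $R$, and $S_aS_a$, each coefficient factors through $(n-2)$: one finds $24-12n=-12(n-2)$ for $Q$, $6n-12=6(n-2)$ for $P_{aa}$, $-2n+4=-2(n-2)$ for $R$, and $12-3n-6=-3(n-2)$ for $S_aS_a$. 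Pulling out the common $(n-2)$ then reproduces \eqref{eq:withE} exactly.

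The one point that must be argued carefully, rather than merely computed, is the passage in the second step: I must justify that only the values $F^{ab}=E\delta_{ab}$, $G^a=ES_a$, $H=EQ$ at $x$ are needed, so that neither the derivatives of $E$ nor the position-dependent (curvature) tail of $\sigma(L)$ feed into $\W(OL^{-m})$. This is precisely the structural content of Proposition~\ref{prop:main}: in the symbol product for $OL^{-m}$ all position derivatives fall on $\sigma(L^{-m})$, and the six cosphere identities in its proof involve only $F,G,H$ evaluated at $x$ paired with the $\partial$-derivatives of the $\mathfrak{c}_{2m+j}$. Once this is invoked, the remainder is the elementary bookkeeping indicated above, and no independent estimate of the neglected $o(\cdot)$ terms is required.
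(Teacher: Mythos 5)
Your proposal is correct and follows essentially the same route as the paper: the paper's proof likewise writes $EL^{-m+1}=(EL)L^{-m}$ and applies Proposition~\ref{prop:main} with $F^{ab}=E\delta^{ab}$, $G^a=ES^a$, $H=EQ$, obtaining the same $(n-2)$-factored coefficients $-12,\,6,\,-2,\,-3$ for $Q$, $P_{aa}$, $R$, $S_aS_a$. Your extra observation that $\sigma(EL)=E\,\sigma(L)$ exactly (the symbol of $E$ being $\xi$-independent) and that only the values of $F^{ab}$, $G^a$, $H$ at $x$ enter is a sound justification of a step the paper leaves implicit.
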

\begin{proof}
It is sufficient to take $F^{ab} = E\delta^{ab}$, $G^a=ES^a$ and $H=EQ$ in the previous proposition, obtaining:
	\begin{equation}
		\begin{aligned}
			\W&(ELL^{-m}) =\frac{\nu_{n-1}}{24}\int_M \dv \Tr\,\bigl[24EQ +12 ES^aS_a + E \delta^{aa}(-12Q+6 P_{bb} \\ 
			&  \phantom{xxxxxxxx} -2R  - 3S_bS_b)+ 2E\delta^{ab}(-6P_{ab} + 2 \mathrm{Ric}_{ab}-3S_aS_b)\bigr] \\
			&= \frac{\nu_{n-1}}{24}\int_M \dv\Tr\,\bigl[ (2-n)12 E Q + 2(2-n) ER  \\
			& \phantom{xxxxxxxx} +6 (n-2) E P_{aa} +  3(2-n) E S_a S_a \bigr] \\
			&= (n-2)\frac{\nu_{n-1}}{24} \int_M \dv \Tr\,\bigl[ -12 E Q -2 ER   + 6  E P_{aa} - 3 E S_a S_a \bigr].
		\end{aligned}
	\end{equation}
\end{proof}
\begin{corollary} \label{cor4}
	For an operator $O$ with a symbol $\sigma(O)=F^{ab}\xi_a\xi_b +i G^a \xi_a +H$, we have,
	\begin{equation}
		\begin{aligned}
			\W \Biggl( \Bigl(O - \frac{1}{n-2}F^{aa}L\Bigr) L^{-m} \Biggr) &= 	
			\frac{\nu_{n-1}}{24}\int_M \dv \Tr[24H +12 G^aS_a \\
			&\phantom{xxxx}+ 2F^{ab}(-6P_{ab}+2 \mathrm{Ric}_{ab}-3S_aS_b)].
		\end{aligned}
	\end{equation}
\end{corollary}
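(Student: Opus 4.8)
The plan is to exploit the linearity of the noncommutative residue $\W$ together with the two computations already established, so that the statement reduces to a purely algebraic cancellation. Writing
\[
\W\Bigl(\bigl(O-\tfrac{1}{n-2}F^{aa}L\bigr)L^{-m}\Bigr)=\W(OL^{-m})-\frac{1}{n-2}\,\W(F^{aa}LL^{-m}),
\]
the first summand is handled directly by Proposition~\ref{prop:main}, while the second is exactly of the form treated in Corollary~\ref{lemma:E}.

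First I would note that the fibrewise trace $F^{aa}=\sum_a F^{aa}$ is itself a $C^\infty(M)$--endomorphism of $S$, so it qualifies as the endomorphism $E$ in Corollary~\ref{lemma:E}. Setting $E=F^{aa}$ there, and using $LL^{-m}=L^{-m+1}$, gives
\[
\W(F^{aa}LL^{-m})=\frac{n-2}{24}\,\nu_{n-1}\int_M\dv\,\Tr\bigl[F^{aa}(-12Q+6P_{bb}-2R-3S_bS_b)\bigr].
\]

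Next I would substitute the explicit formula \eqref{general_wres_uDwD} for $\W(OL^{-m})$ and subtract $\tfrac{1}{n-2}$ times the expression above. The crucial observation is that the prefactor $\tfrac{1}{n-2}$ exactly cancels the factor $(n-2)$ produced by Corollary~\ref{lemma:E}, so the entire $F^{aa}(-12Q+6P_{bb}-2R-3S_bS_b)$ contribution in \eqref{general_wres_uDwD} is annihilated. What remains is precisely the claimed right-hand side, namely $24H+12G^aS_a+2F^{ab}(-6P_{ab}+2\mathrm{Ric}_{ab}-3S_aS_b)$, integrated against the universal prefactor $\tfrac{\nu_{n-1}}{24}$.

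There is essentially no analytic obstacle here: the result is a formal consequence of linearity and the matching of the scalar factor $n-2$. The only point demanding care is the bookkeeping — one must ensure that the term subtracted via Corollary~\ref{lemma:E} involves exactly the same contraction $F^{aa}$ (the fibre trace of the symmetric leading symbol $F^{ab}$ of $O$), so that the cancellation is exact rather than merely approximate, and that the remaining $F^{ab}$ term is left untouched.
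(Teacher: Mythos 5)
Your proof is correct and is exactly the argument the paper intends (the paper states Corollary~\ref{cor4} without proof, as an immediate consequence of Proposition~\ref{prop:main} and Corollary~\ref{lemma:E}): linearity of $\W$, the substitution $E=F^{aa}$ in Corollary~\ref{lemma:E}, and the cancellation of the factor $n-2$ remove precisely the $F^{aa}(-12Q+6P_{bb}-2R-3S_bS_b)$ term from \eqref{general_wres_uDwD}. Your closing remark that $F^{aa}$, the fibre trace of the principal symbol, is a genuine $C^\infty(M)$--endomorphism and hence a legitimate choice of $E$ is the right bookkeeping point to flag.
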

\section{Einstein Functional for Dirac-type operators}
Our goal here is to compute the Einstein functional, defined in \cite{Dabrowski23} as a bilinear functional on the space differential forms, 
\begin{equation}
\label{eq:Ein}
   \mathscr{G}(u,w) =\W\bigl(\hat{u} \{D, \hat{w} \}D |D|^{-n}\bigr),
\end{equation}
for the geometries with (antisymmetric) torsion. Here, $\hat{u}$ denotes the Clifford multiplication by the one-form $u$. We begin with the following.
\begin{proposition}
 \label{prop:Ein}
 For $D\!=\!D_0+B$ with $D_0$ being the torsionless Dirac operator constructed from the Levi-Civita connection lifted to the spinor bundle, $D_0=i \gamma^a\nabla_{a}^{(s)}$, the Einstein functional density 
 in normal coordinates reads:
 \begin{equation}
 		\begin{aligned}
 			\mathscr{G}_D(u,w)(x)=\ &\mathscr{G}_{D_0}(u,w)(x)
 			\\
 			+&\frac{\nu_{n-1}}{2} %\int_M {\bmag dvol_g \emag} 
 			\Tr\Big\{ iu_aw_{bc}[\gamma^a,\gamma^b]\{\gamma^c, B_0\}+\frac{i}{2}u_aw_{b}[\gamma^a,\gamma^b]\{\gamma^c, B_c\}\\
 			&\hspace{20pt}+u_aw_b\left[(\delta^{ab}B_0-\gamma^a\{\gamma^b,B_0\})(\{\gamma^c,B_0\}\gamma^c-2B_0)\right]\Big\}. \label{Einstein_B}
 		\end{aligned}
 	\end{equation}
In particular, for $B=-\frac{i}{8}T_{jkl}\gamma^j \gamma^k \gamma^l$, the above correction to $\mathcal{G}_{D_0}$ reads
\begin{equation}
		\begin{aligned}
			3 \cdot 2^{m-1} \nu_{n-1}\Bigl[-u_aw_{bc}T^{0}_{abc}+\frac{1}{8}u_aw_b\Bigl(\delta^{ab}T^{0}_{ijk}T^{0}_{ijk}-4T^{c}_{abc}-6 T^{0}_{ajk}T^{0}_{bjk}\Bigr)\Bigr].
		\end{aligned}
	\end{equation}
 \end{proposition}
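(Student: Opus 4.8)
The plan is to recognize the Einstein functional (\ref{eq:Ein}) as a Wodzicki residue of exactly the type treated in Proposition \ref{prop:main}, with $L=D^2$ the relevant Laplace-type operator (so that $|D|^{-n}=L^{-m}$) and $O=\hat u\{D,\hat w\}D$. Since $\hat u$ and $\hat w$ are Clifford multiplications (order zero), $\{D,\hat w\}$ is first order and $O$ is a genuine second-order differential operator, so its symbol can be written $\sigma(O)=F^{ab}\xi_a\xi_b+iG^a\xi_a+H+o(\mathbf{1})$ and Proposition \ref{prop:main} applies verbatim. The computation then splits into two inputs: the symbol data $(P_{ab},S_a,Q)$ of $L=D^2$, and the coefficients $(F^{ab},G^a,H)$ of $O$. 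I would expand both around the torsionless point $B=0$; the $B=0$ contribution is by definition $\mathscr{G}_{D_0}(u,w)$, so the correction is read off from the terms linear and quadratic in $B$.

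For $L=D^2=D_0^2+\{D_0,B\}+B^2$ I would obtain the corrected symbol data by commuting $B$ past $D_0=i\gamma^a\nabla^{(s)}_a$. In normal coordinates at $x$, writing $B=B_0+B_c\,x^c+o(\mathbf{x})$ with $B_0=B(x)$ and $B_c=\partial_c B|_x$, one finds $\{D_0,B\}=i\gamma^a(\nabla_a B)+i\{\gamma^a,B\}\nabla_a$, whose zero-order part $i\gamma^aB_a$ and square $B_0^2$ shift $Q$, while the first-order part shifts $S_a$ by $i\{\gamma^a,B_0\}$ and $P_{ab}$ by the corresponding $B_c$-contribution. The torsionless values ($S_a^{(0)}=0$, $Q^{(0)}=\tfrac14 R$, and the curvature part of $P_{ab}^{(0)}$) reproduce $\mathscr{G}_{D_0}$.

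For $O$ I would expand $A:=\{D,\hat w\}=iw_{ac}\gamma^a\gamma^c+2iw_a\nabla_a+w_c\{\gamma^c,B\}$, with $w_{ac}=\nabla_a w_c$, and compose $O=\hat u A(D_0+B)$ by the symbol-product rule (retaining the subleading $-i\,\partial_\xi\sigma\,\partial_x\sigma$ correction). Only the block $\hat u\,2iw_a\nabla_a D_0$ is second order, so $F^{ab}$ is $B$-independent; it nonetheless contributes to the correction through the $B$-shifted entries $(P_{ab},Q,S_a)$ of $L$ multiplying it in (\ref{general_wres_uDwD}). The $B$-linear first-order pieces $\hat u\,2iw_a\nabla_a B$, $\hat u(iw_{ac}\gamma^a\gamma^c)D_0$ and $\hat u\,w_c\{\gamma^c,B\}D_0$ fix the correction to $G^a$, while the zero-order pieces $\hat u(iw_{ac}\gamma^a\gamma^c)B$ and $\hat u\,w_c\{\gamma^c,B\}B$ fix $H$. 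Substituting into (\ref{general_wres_uDwD}), subtracting the $B=0$ value, and reducing the $\gamma$-monomials with $\{\gamma^a,\gamma^b\}=2\delta^{ab}$ should yield the three lines of (\ref{Einstein_B}): the $w_{bc}$-term from the $\{D_0,\hat w\}$ derivative piece paired with $B_0$, the $B_c$-term from the shifts of $Q$ and $S_a$, and the quadratic $B_0^2$-term from $H$ together with the $Q$ and $S_aS_b$ entries of $L$.

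Finally, for $B=-\tfrac{i}{8}T_{jkl}\gamma^j\gamma^k\gamma^l$ I would set $B_0=-\tfrac{i}{8}T^0_{jkl}\gamma^j\gamma^k\gamma^l$ and $B_c=-\tfrac{i}{8}T^c_{jkl}\gamma^j\gamma^k\gamma^l$, insert them into (\ref{Einstein_B}), and evaluate the spinor traces using $\Tr\I=2^m$, the vanishing of odd traces, and the standard identities for $\Tr(\gamma^{a_1}\cdots\gamma^{a_{2p}})$, exploiting the total antisymmetry of $T$ to collapse the index contractions into $T^0_{abc}$, $T^c_{abc}$, $T^0_{ijk}T^0_{ijk}$ and $T^0_{ajk}T^0_{bjk}$, reproducing the stated $3\cdot 2^{m-1}\nu_{n-1}$ expression. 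The main obstacle I anticipate lies in the two middle steps: getting every first-order-in-$\mathbf{x}$ contribution right, since $B_c=\partial_c B$ and $w_{bc}=\nabla_b w_c$ feed the subleading symbol composition and the $(P_{ab},S_a)$ data at once, and then taming the high-rank $\gamma$-matrix products in the final trace, where the cancellations forced by the antisymmetry of $T$ must be tracked carefully to avoid spurious contractions.
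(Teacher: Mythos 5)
Your proposal is correct and takes essentially the same route as the paper: the paper likewise reduces everything to the general formula of Proposition~\ref{prop:main} with $L=D^2$ (whose $B$-shifted data $S_a=i\{\gamma^a,B_0\}$, $P_{ab}\ni i\{\gamma^a,B_b\}$, $Q\ni i\gamma^aB_a+B_0^2$ are exactly those you describe, cf.\ the proof of Lemma~\ref{lemma:D}), the only cosmetic difference being that the paper applies Corollary~\ref{cor4} to $O=\hat u D\hat w D$ --- where, since $F_0^{aa}=(2-n)\,u_cw_d\gamma^c\gamma^d$, the subtracted term $-\frac{1}{n-2}F^{aa}L$ reconstitutes $\hat u\hat w D^2$ and hence the full $\W\bigl(\hat u\{D,\hat w\}D|D|^{-n}\bigr)$ --- instead of feeding $O=\hat u\{D,\hat w\}D$ into Proposition~\ref{prop:main} directly as you do. Your remaining steps (symbol composition including the subleading $-i\,\partial_\xi\sigma\,\partial_x\sigma$ correction, separating the $w_{bc}$-linear, $B_c$-linear and $B_0$-quadratic contributions, and the final $\gamma$-trace evaluation for $B=-\frac{i}{8}T_{jkl}\gamma^j\gamma^k\gamma^l$ with $\Tr\,\I=2^m$) mirror the paper's computation leading to Eq.~\eqref{Einstein_B1} and its simplification.
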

\begin{proof}
    Let us take an operator $O=\hat{v}D\hat{w}D$, with
    \begin{equation}
        \hat{u}=u_a\gamma^a +o({\bf 1}), \qquad \hat{w}=(w_b +w_{bc}x^c)\gamma^b +o({\bf x}),
    \end{equation}
Its symbol is $\sigma(O)=F^{ab}\xi_a\xi_b + G^a \xi_a +H$ with
    \begin{equation}
        \begin{split}
            &F^{ab}\xi_a\xi_b=F^{ab}_0\xi_a\xi_b=2 u_c w_b\delta_{ad}\gamma^c\gamma^d \xi_a\xi_b - u_c w_d \gamma^c\gamma^d \delta_{ab}\xi_a\xi_b+o({\bf 1}),\\
            &G^a=G^a_0+iu_c w_b\gamma^c\bigl(2\delta^{ba}B_0+\{\gamma^b,B_0\}\gamma^a-\gamma^b\{\gamma^a,B_0\}\bigr) + o({\bf 1}),\\
            &H=H_0+u_a\gamma^a(i\gamma^c\gamma^bw_{bc}B_0+i\gamma^c\gamma^bw_bB_c+w_bB_0\gamma^bB_0)+o({\bf 1}),
        \end{split}
    \end{equation}
where $G_0^a=-u_c w_{bd}\gamma^c\gamma^d \gamma^b \gamma^a$. Since $O=O_0+\hat{u}D_0 \hat{w}B + \hat{u}B\hat{w}D_0+\hat{u}B\hat{w}B$, with $O_0=\hat{u}D_0\hat{w}D_0$, it remains to express the latter three summands in normal coordinates up to an appropriate order. We have
\begin{equation}
    \begin{split}
        \hat{u}D_0\hat{w}B&=iu_a\gamma^a\gamma^c\gamma^b(w_bB_0\partial_c+w_{bc}B_0+w_bB_c)+ o({\bf 1})\\
        \hat{u}B\hat{w}D_0&=iu_aw_b\gamma^aB_0\gamma^b\gamma^c\partial_c+ o({\bf 1})\\
        \hat{u}B\hat{w}B&=u_aw_b\gamma^aB_0\gamma^bB_0 + o({\bf 1}).
    \end{split}
\end{equation}
Therefore, $F^{ab}\xi_a\xi_b=u_cw_d\gamma^c\gamma^a\gamma^d\gamma^b\xi_a\xi_b+ o({\bf 1})  =2u_cw_b\delta_{ad}\gamma^c\gamma^d\xi_a\xi_b-u_cw_d\gamma^c\gamma^d\delta_{ab}\xi_a\xi_b+ o({\bf 1})$, and, as an immediate consequence, $F_0^{aa}=(2-n)u_cw_d\gamma^c\gamma^d$. The rest follows from a straightforward computation,
applying  Corollary \ref{cor4}.  The difference between functionals is,
\begin{equation}
\begin{aligned}
&	\mathscr{G}_D(u,w)-\mathscr{G}_{D_0}(u,w) = \\
&=\frac{\nu_{n-1}}{24}\int\hbox{Tr\ }\Big(24u_a\gamma^a(i\gamma^c\gamma^bw_{bc}B_0 \!+i\!\gamma^c\gamma^bw_bB_c+w_bB_0\gamma^bB_0)\\
  &  \;\; +12 i \bigl[-u_aw_{bd}\gamma^a\gamma^d\gamma^b\gamma^c
  \!+\! iu_aw_b\gamma^a(2\delta^{bc}B_0
  \!+\! \{\gamma^b,B_0\}\gamma^c \!-\! \gamma^b\{\gamma^c,B_0\})\bigr]\{\gamma^c,B_0\}\\
 &  \;\;  +6(u_cw_b\gamma^c\gamma^a \!+\! u_cw_a\gamma^c\gamma^b
 \!-\! u_cw_d\gamma^c\gamma^d\delta_{ab})\bigl[-2i\{\gamma^a,B_b\} \!+\! \{\gamma^a,B_0\}\{\gamma^b,B_0\}\bigr]\Big).\label{Einstein_B1}
\end{aligned}
\end{equation}
To simplify the above expression, we first collect all the terms with $w_{bc}$:
\begin{equation}
\begin{aligned}
  &\frac{i}{2}u_aw_{bc}\Tr\,[2\gamma^a\gamma^c\gamma^bB_0-\gamma^a\gamma^c\gamma^b\gamma^d\{\gamma^d,B_0\}]\\
&=\frac{i}{2}u_aw_{bc}\hbox{Tr\ }[(2-n)\gamma^a\gamma^c\gamma^bB_0-\gamma^a\gamma^c\gamma^b\gamma^dB_0\gamma^d]
\\
&=\frac{i}{2}u_aw_{bc}\Tr\,[(2-n)\gamma^a\gamma^c\gamma^bB_0-2\delta^{ad}\gamma^c\gamma^b\gamma^dB_0+2\delta^{cd}\gamma^a\gamma^b\gamma^dB_0\\
&\hspace{50pt}-2\delta^{bd}\gamma^a\gamma^c\gamma^dB_0+n\gamma^a\gamma^c\gamma^bB_0]
\\
&
=iu_aw_{bc}\Tr\,[-\gamma^c\gamma^b\gamma^aB_0+\gamma^a\gamma^b\gamma^cB_0]\\
&=iu_aw_{bc}\hbox{Tr\ }[-2\delta^{ab}\gamma^cB_0+\gamma^a\gamma^bB_0\gamma^c+\gamma^a\gamma^b\gamma^cB_0]
\\
&
=iu_aw_{bc}\Tr\,[-2\delta^{ab}\gamma^cB_0+\gamma^a\gamma^b\{\gamma^c,B_0\}].
    \end{aligned}
\end{equation}
Similarly, terms linear in $B$ can be collected into
\begin{equation}
    \begin{aligned}
&\frac{i}{2}u_aw_{b}\Tr\,[\gamma^a(2\gamma^c\gamma^bB_c-\gamma^c\{\gamma^c,B_b\}-\gamma^c\{\gamma^b,B_c\}+\gamma^b\{\gamma^c,B_c\})]\\
&=\frac{i}{2}u_aw_{b}\Tr\,[\gamma^a(\gamma^c\gamma^bB_c-nB_b-\gamma^cB_b\gamma^c-\gamma^cB_c\gamma^b+\gamma^b\gamma^cB_c+\gamma^bB_c\gamma^c)]\\
&=
\frac{i}{2}u_aw_{b}\Tr\,[\gamma^a(\gamma^c\gamma^b+\gamma^b\gamma^c)B_c-n\gamma^aB_b-2\delta^{ac}\gamma^cB_b+n\gamma^aB_b\\
&\hspace{50pt}-\gamma^b\gamma^a\gamma^cB_c+\gamma^a\gamma^bB_c\gamma^c]\\
&=
\frac{i}{2}u_aw_{b}\Tr\,[2\delta^{bc}\gamma^aB_c-2\gamma^aB_b-2\delta^{ab}\gamma^cB_c+\gamma^a\gamma^b(\gamma^cB_c+B_c\gamma^c)]\\
&=
\frac{i}{2}u_aw_{b}\Tr\,[-2\delta^{ab}\gamma^cB_c+\gamma^a\gamma^b\{\gamma^c,B_c\}],
    \end{aligned}
\end{equation}
and finally, terms quadratic in $B$ are
\begin{equation}
    \begin{aligned}
        &\frac{1}{4}u_aw_b\Tr \Bigl[\gamma^a\Bigl(4B_0\gamma^bB_0-4B_0\{\gamma^b,B_0\}-2\{\gamma^b,B_0\}\gamma^c\{\gamma^c,B_0\}+2\gamma^b\{\gamma^c,B_0\}^2\\
        &\hspace{50pt}+\gamma^c\{\gamma^c,B_0\}\{\gamma^b,B_0\}+\gamma^c\{\gamma^b,B_0\}\{\gamma^c,B_0\}-\gamma^b\{\gamma^c,B_0\}^2\Bigr)\Bigr]\\
        &=
\frac{1}{4}u_aw_b\Tr \Bigl[\gamma^a\Bigl(-4B_0^2\gamma^b-2\{\gamma^b,B_0\}\gamma^c\{\gamma^c,B_0\}+\gamma^b\{\gamma^c,B_0\}^2\\
&\hspace{50pt}+\gamma^c\{\gamma^c,B_0\}\{\gamma^b,B_0\}+\gamma^c\{\gamma^b,B_0\}\{\gamma^c,B_0\}\Bigr)\Bigr].
    \end{aligned}
\end{equation}
To simplify the last expression, we expand all the anticommutators ($17$ terms), group them according to the number of gamma operators involved, and reduce the expressions by making use of the algebra of gamma matrices. This allows to write the above expression in the following form
\begin{equation}
\label{eq:fin}
    \begin{split}
        \frac{1}{4}u_aw_b\Tr\Bigl[&\gamma^a\Bigl(-2B_0\gamma^b\gamma^cB_0\gamma^c+(2-n)B_0\gamma^bB_0-\gamma^bB_0\gamma^cB_0\gamma^c+\gamma^cB_0\gamma^cB_0\gamma^b\\
        &+\gamma^cB_0\gamma^bB_0\gamma^c+(n-2)(B_0^2\gamma^b-\gamma^bB_0^2)\Bigr)\Bigr].
    \end{split}
\end{equation}
Next, we observe that
\begin{equation}
    \begin{split}
  &-2B_0\gamma^b\gamma^cB_0\gamma^c-\gamma^bB_0\gamma^cB_0\gamma^c=-2\{\gamma^b,B_0\}\gamma^cB_0\gamma^c+\gamma^bB_0\gamma^cB_0\gamma^c\\
  &=-2\{\gamma^b,B_0\}\{\gamma^c,B_0\}\gamma^c+2n\{\gamma^b,B_0\}B_0+\gamma^bB_0\gamma^cB_0\gamma^c,
    \end{split}
\end{equation}
\begin{equation}
\begin{split}
    \Tr\Bigl[\gamma^a\gamma^cB_0\gamma^bB_0\gamma^c\Bigr]&=
\Tr\Bigl[2\delta^{ac}\gamma^cB_0\gamma^bB_0-\gamma^a\gamma^c\gamma^cB_0\gamma^bB_0\Bigr]\\
&=(2-n)\Tr\Bigl(\gamma^aB_0\gamma^bB_0\Bigr),
\end{split}
\end{equation}
and
\begin{equation}
    \begin{split}
        \Tr\Bigl[\gamma^a\gamma^cB_0\gamma^cB_0\gamma^b\Bigr]
        &=
\Tr\Bigl[2\delta^{ac}\gamma^bB_0\gamma^cB_0- 2\delta^{bc}\gamma^aB_0\gamma^cB_0+ \gamma^c\gamma^b\gamma^aB_0\gamma^cB_0\Bigr]\\
&=\Tr\Bigl[2\gamma^bB_0\gamma^aB_0- 2\gamma^aB_0\gamma^bB_0+ \gamma^b\gamma^aB_0\gamma^cB_0\gamma^c\Bigr]\\
&=\Tr\Bigl[\gamma^b\gamma^aB_0\gamma^c B_0\gamma^c\Bigr].
\end{split}
\end{equation}
Therefore, the expression in Eq.~\eqref{eq:fin} can be written as
\begin{equation}
    \begin{aligned}
&
\frac{1}{4}u_aw_b\Tr\Bigl[\gamma^a\Bigl(-2\{\gamma^b,B_0\}\{\gamma^c,B_0\}\gamma^c+2n\{\gamma^b,B_0\}B_0+2(2-n)B_0\gamma^bB_0\\
&\hspace{50pt}+(n-2)(B_0^2\gamma^b-\gamma^bB_0^2)\Bigr)+(\gamma^b\gamma^a+\gamma^a\gamma^b)B_0\gamma^cB_0\gamma^c\Bigr]
\\
&
=\frac{1}{4}u_aw_b\Tr\Bigl[2\delta^{ab}B_0\{\gamma^c,B_0\}\gamma^c-2n\delta^{ab}B_0^2+\gamma^a(-2\{\gamma^b,B_0\}\{\gamma^c,B_0\}\gamma^c+
\\
&
\hspace{50pt}+2n\{\gamma^b,B_0\}B_0+(4-2n)\{\gamma^b,B_0\}B_0+2(n-2)\gamma^bB_0^2\\
&\hspace{50pt}+(n-2)B_0^2\gamma^b-(n-2)\gamma^bB_0^2)\Bigr]=
\\
&
=\frac{1}{4}u_aw_b\Tr\Bigl[2\delta^{ab}B_0\{\gamma^c,B_0\}\gamma^c-2\gamma^a\{\gamma^b,B_0\}\{\gamma^c,B_0\}\gamma^c+4\gamma^a\{\gamma^b,B_0\}B_0\\
&\hspace{50pt}-2n\delta^{ab}B_0^2+(n-2)(\gamma^a\gamma^b+\gamma^b\gamma^a)B_0^2\Bigr]
\\
&
=\frac{1}{2}u_aw_b\Tr\Bigl[\delta^{ab}B_0\{\gamma^c,B_0\}\gamma^c-\gamma^a\{\gamma^b,B_0\}\{\gamma^c,B_0\}\gamma^c\\
&\hspace{50pt}+2\gamma^a\{\gamma^b,B_0\}B_0-2\delta^{ab}B_0^2\Bigr]
\\
&
=\frac{1}{2}u_aw_b\hbox{Tr\ }[(\delta^{ab}B_0-\gamma^a\{\gamma^b,B_0\})(\{\gamma^c,B_0\}\gamma^c-2B_0)].
    \end{aligned}
\end{equation}
Finally, notice that
\begin{equation}
    \Tr(-2\delta^{ab}\gamma^c B)=-\frac{1}{2}\Tr\bigl(\{\gamma^a,\gamma^b\}\{\gamma^c, B\}\bigr).
\end{equation}
Combining all the above terms concludes the proof.
\end{proof}

\section{Tensorial Functionals on Differential Forms}
As the Einstein functional computed in Proposition \ref{prop:Ein} depends explicitly on the derivatives of the one-form $w$, we investigate here a natural question, 
whether there exist functionals for a perturbed Dirac operator, whose densities are represented as evaluations of tensors on the forms. We refer to them as {\it tensor type functionals}, or {\it tensorial functionals}. We shall verify that
no form of functional, which depends polynomially on $D, |D|, |D|^{-1}$ can satisfy
this condition. Hence, we obtain a consistency condition on possible perturbations
of the Dirac operator to provide a tensorial Einstein functional.

In what follows, we concentrate on the spin Dirac operator $D_0=i \gamma^a\nabla_{e_a}^{(s)}=i\gamma^a e_a -\frac{1}{4}\omega_{abc}\gamma^a\gamma^b\gamma^c$, with the canonical spin connection $\omega_{abc}=\frac{1}{2}(c_{abc}+c_{cab}+c_{cba})$ defined by structure constants $[e_a,e_b]=c_{abc}e_c$. Since in the normal coordinates $c_{pqr}=\frac{1}{2}R_{pqnr}x^n +o({\bf x})$, $\omega_{ijk}=-\frac{1}{2}R_{nijk}x^n +o({\bf x})$ and $\omega_{ijk}\gamma^i\gamma^j\gamma^k=\mathrm{Ric}_{ab}\gamma^b x^a$, we have
\begin{equation}
	D_0=i\gamma^a\Bigl(\partial_a - \frac{1}{6}R_{abcd}x^b x^c \partial_d-\frac{1}{4}\mathrm{Ric}_{ab} x^b\Bigr).
	\label{Dirac}
\end{equation}

\subsection{$F_1(k)$}
The goal in this section is to compute the relevant terms in $F_1(k)=\W\bigl(u |D|^k w |D|^{-n-k+2}\bigr)$. We begin with the following lemma (cf. also \cite[Chapter~25]{Taylor}).
\begin{lemma}\label{lemma:D}
The symbol of $|D|$, $\sigma(|D|)=\mathfrak{d}_1+\mathfrak{d}_{0}+\mathfrak{d}_{-1}+\ldots$, reads
\begin{align}
    \mathfrak{d}_1&=\frac{1}{|\xi|}\Bigl(\delta_{a,b}+\frac{1}{6}R_{acbd}x^c x^d\Bigr)\xi_a\xi_b +o({\bf x^2}), \label{lemII1d1}\\
    \mathfrak{d}_0&=i\frac{\xi_a}{|\xi|}x^b \Bigl(\frac{1}{3}\mathrm{Ric}_{ab} + \frac{1}{8}R_{abjk}\gamma^j\gamma^k\Bigr)-\frac{1}{2|\xi|}\xi_a\{\gamma^a, B\} + o({\bf x}),\\
    \mathfrak{d}_{-1}&=\frac{R}{8|\xi|}-\frac{\mathrm{Ric}_{ab}}{12|\xi|^3}\xi_a\xi_b +\frac{1}{2|\xi|} (i\gamma^a B_a +B_0^2)\nonumber \\
    &-\frac{1}{4|\xi|^3}\xi_a\xi_b \Bigl(i\{\gamma^a, B_b\}+\frac{1}{2}\{\gamma^a, B\}\{\gamma^b, B\}\Bigr) +o({\bf 1}).
\end{align}
\end{lemma}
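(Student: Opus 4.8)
The plan is to use the identity $|D|=\sqrt{D^2}$ and to extract the square root at the level of symbols. Since $D=D_0+B$ with $B$ of order zero, $D^2$ is of Laplace type, so its symbol $\sigma(D^2)=\mathfrak p_2+\mathfrak p_1+\mathfrak p_0+\cdots$ has the form prescribed by Definition~\ref{defi1}; the torsion-free part of the answer is then classical (cf.\ the reference to Taylor), and the real task is to carry the $B$-dependence through the construction. First I would read off $\sigma(D)$ from Eq.~\eqref{Dirac}: its order-one part is $\mathfrak e_1=-\gamma^a\xi_a+\tfrac16\gamma^aR_{abcd}x^bx^c\xi_d$, its order-zero part is $\mathfrak e_0=-\tfrac{i}{4}\gamma^a\mathrm{Ric}_{ab}x^b$, and to these one adds the multiplication symbol $\sigma(B)=B+B_ax^a+o(\mathbf x)$.

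Next I would compute $\sigma(D^2)=\sigma(D)\#\sigma(D)$ from the Kohn--Nirenberg composition rule
\[
\sigma(A\circ B)\sim\sum_{\alpha}\frac{(-i)^{|\alpha|}}{\alpha!}\,\big(\partial_\xi^\alpha\sigma(A)\big)\big(\partial_x^\alpha\sigma(B)\big),
\]
retaining normal-coordinate terms to the order dictated by the target expansion. The principal symbol is $\mathfrak p_2=(\delta_{ab}+\tfrac13R_{acbd}x^cx^d)\xi_a\xi_b$; the two-gamma curvature piece $R_{abjk}\gamma^j\gamma^k$ arises not from $\sigma(D)$ directly but from the derivative correction $-i\,\partial_{\xi_a}\mathfrak e_1\,\partial_{x_a}\mathfrak e_1$; the scalar $R/4$ of the Lichnerowicz formula lands in $\mathfrak p_0$; and the anticommutator structure of the perturbation enters through $\{D_0,B\}$, giving a first-order contribution $S_a=i\{\gamma^a,B\}$ together with the $x$-linear coefficient $P_{ab}=i\{\gamma^a,B_b\}$, while $B^2$ supplies the $B_0^2$ term of $\mathfrak p_0$.

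The core step is the square-root extraction. Writing $\sigma(|D|)=\mathfrak d_1+\mathfrak d_0+\mathfrak d_{-1}+\cdots$ and imposing $\sigma(|D|)\#\sigma(|D|)=\sigma(D^2)$, I collect the composition rule by $\xi$-homogeneity. Because the principal symbol $\mathfrak p_2$ is scalar, so is $\mathfrak d_1=\sqrt{\mathfrak p_2}$, and this centrality turns $\mathfrak d_1\mathfrak d_k+\mathfrak d_k\mathfrak d_1$ into $2\mathfrak d_1\mathfrak d_k$, yielding the explicitly solvable recursion
\begin{align*}
\mathfrak d_1 &= \sqrt{\mathfrak p_2},\\
2\mathfrak d_1\,\mathfrak d_0 &= \mathfrak p_1 + i\,\partial_{\xi_a}\mathfrak d_1\,\partial_{x_a}\mathfrak d_1,\\
2\mathfrak d_1\,\mathfrak d_{-1} &= \mathfrak p_0 - \mathfrak d_0^2 + i\big(\partial_{\xi_a}\mathfrak d_1\,\partial_{x_a}\mathfrak d_0+\partial_{\xi_a}\mathfrak d_0\,\partial_{x_a}\mathfrak d_1\big) + \tfrac12\,\partial_{\xi_a}\partial_{\xi_b}\mathfrak d_1\,\partial_{x_a}\partial_{x_b}\mathfrak d_1.
\end{align*}
Taylor-expanding the first line in $\mathbf x$ gives $\mathfrak d_1=\tfrac1{|\xi|}(\delta_{ab}+\tfrac16R_{acbd}x^cx^d)\xi_a\xi_b$, which is Eq.~\eqref{lemII1d1}; dividing $\mathfrak p_1$ by $2|\xi|$ (the derivative correction being $O(\mathbf x)$ and feeding only the curvature part) reproduces the Ricci, $R_{abjk}\gamma^j\gamma^k$, and $-\tfrac1{2|\xi|}\xi_a\{\gamma^a,B\}$ terms of $\mathfrak d_0$.

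The main obstacle is the third line: assembling $\mathfrak d_{-1}$ at $\mathbf x=0$ requires the full $\mathfrak p_0$, the quadratic term $\mathfrak d_0^2$, the second $\xi$-derivative of $\mathfrak d_1$ contracted against the Riemann tensor, and the cross term $i\,\partial_{\xi_a}\mathfrak d_1\,\partial_{x_a}\mathfrak d_0$ evaluated on the $x$-linear part of $\mathfrak d_0$ --- all kept in the noncommutative gamma-matrix algebra. Tracking these carefully is exactly what produces the split between the $|\xi|^{-1}$ and $|\xi|^{-3}$ homogeneous pieces: $\mathfrak p_0/(2|\xi|)$ yields $\tfrac{R}{8|\xi|}$ and $\tfrac1{2|\xi|}(i\gamma^aB_a+B_0^2)$; the Ricci contraction and the second-derivative term combine into $-\tfrac{\mathrm{Ric}_{ab}}{12|\xi|^3}\xi_a\xi_b$; while $-\mathfrak d_0^2/(2|\xi|)$ and the cross term generate $-\tfrac1{4|\xi|^3}\xi_a\xi_b\big(i\{\gamma^a,B_b\}+\tfrac12\{\gamma^a,B\}\{\gamma^b,B\}\big)$. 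The bookkeeping of these homogeneity degrees and of the anticommutator coefficients is the only delicate point; everything else is dictated by the recursion.
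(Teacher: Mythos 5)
Your proposal is correct and takes essentially the same route as the paper's proof: identify $D^2$ as a Laplace-type operator with data $S_a=i\{\gamma^a,B_0\}$, $P_{ab}=\tfrac{2}{3}\mathrm{Ric}_{ab}+\tfrac14 R_{abjk}\gamma^j\gamma^k+i\{\gamma^a,B_b\}$, $Q=\tfrac14 R+i\gamma^a B_a+B_0^2$, and then solve exactly the same scalar-principal-symbol recursion $\mathfrak{d}_1^2=\mathfrak{a}_2$, $2\mathfrak{d}_1\mathfrak{d}_0=\mathfrak{a}_1+i\partial_\xi^a\mathfrak{d}_1\partial_a\mathfrak{d}_1$, $2\mathfrak{d}_1\mathfrak{d}_{-1}=\mathfrak{a}_0-\mathfrak{d}_0^2+\cdots$ order by order. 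The only differences are cosmetic: the paper quotes the Laplace-type data of $D^2$ directly instead of re-deriving them by symbol composition, and it observes that the first-order derivative correction $-i\partial_\xi^a\mathfrak{d}_1\partial_a\mathfrak{d}_1$ actually vanishes to the relevant order by the (anti-)symmetry of the Riemann tensor --- it does not ``feed the curvature part'' as your parenthetical suggests --- so that $\mathfrak{d}_0=\mathfrak{a}_1/(2\mathfrak{d}_1)+o({\bf x})$ holds exactly, consistent with your own (correct) claim that $\mathfrak{a}_1/(2|\xi|)$ alone reproduces all terms of $\mathfrak{d}_0$.
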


\begin{proof}
        We start with the observation that 
        \begin{align}
            |D|^2=D^2=&-\left(\delta_{ab}+\frac{1}{3} R_{acbd}x^cx^d+o({\bf x^2})\right)\partial_a\partial_b\nonumber\\
            &+\left(\frac{2}{3} \mathrm{Ric}_{ab}x^b+\frac{1}{4}R_{abjk}\gamma^j\gamma^k x^b+i\{\gamma^a,B\}+o({\bf x})\right)\partial_a\nonumber\\
            &+\left(\frac{1}{4}R+i\gamma^a B_a+B_0^2+o({\bf 1})\right),
        \end{align}
        so $D^2$ is a Laplace-type operator $L$ with 
        \begin{align}
            S_a=i\{\gamma^a,B_0\},\quad& P_{ab}=\frac{2}{3} \mathrm{Ric}_{ab}+\frac{1}{4}R_{abjk}\gamma^j\gamma^k+i\{\gamma^a,B_b\}, \nonumber \\
            Q=\frac{1}{4}R&+i\gamma^a B_a+B_0^2.
        \end{align}
        $|D|$ is a positive first-order pseudodifferential operator with the principal symbol being a scalar. In a flat space, its principal symbol should be just $\mathfrak{d}_1=|\xi|$. In normal coordinates it gives $\mathfrak{d}_1=|\xi|+d^{11}_cx^c+d^{12}_{cd}x^cx^d+o({\bf x^2})$. Eqn.~\eqref{lemII1d1} is obtained by setting $(\mathfrak{d}_1)^2=a_2$ from \eqref{symbolL} and comparing coefficients with the same power of $x$.

        Next, we get $\mathfrak{a}_1=\mathfrak{d}_1 \mathfrak{d}_0+\mathfrak{d}_0 \mathfrak{d}_1-i\partial_\xi^a\mathfrak{d}_1 \partial_a\mathfrak{d}_1=2\mathfrak{d}_1 \mathfrak{d}_0 +o({\bf x})$,
since $\mathfrak{d}_1$ is a scalar and commutes with $\mathfrak{d}_0$ and $-i\partial_\xi^c\mathfrak{d}_1 \partial_c\mathfrak{d}_1=-i\frac{1}{3}|\xi|^{-1}\xi_a\xi_b\xi_cR_{abcd}x^d+o({\bf x})=0+o({\bf x})$, due to (anti-)symmetry of $R_{abcd}$.
        We can write
        \begin{align}
        \mathfrak d_0=\frac{\mathfrak{a}_1}{2\mathfrak{d}_1}+o({\bf x})=\frac{i\xi_a}{2|\xi|}(P_{ab}x^b+S_a)+o({\bf x}).
        \end{align}
        Using the exact form of $S_a, P_{ab}$, we get the result. Similarly,
    \begin{align}
        \mathfrak a_0=\mathfrak{d}_1 \mathfrak{d}_{-1}+(\mathfrak{d}_0)^2+\mathfrak{d}_{-1} \mathfrak{d}_{1}-i\partial_\xi^a\mathfrak{d}_1 \partial_a\mathfrak{d}_0-i\partial_\xi^a\mathfrak{d}_0 \partial_a\mathfrak{d}_1-\frac{1}{2}\partial_\xi^{ab}\mathfrak{d}_1 \partial_{ab}\mathfrak{d}_1,
    \end{align}
        and
        \begin{align}            
        \mathfrak{d}_{-1}=&\frac{1}{2\mathfrak{d}_1}\left(\mathfrak{a}_0- (\mathfrak d_0)^2+i\partial_\xi^a\mathfrak{d}_1 \partial_a\mathfrak{d}_0+i\partial_\xi^a\mathfrak{d}_0 \partial_a\mathfrak{d}_1+\frac{1}{2}\partial_\xi^{ab}\mathfrak{d}_1 \partial_{ab}\mathfrak{d}_1\right)\nonumber \\
          =&\frac{Q}{2|\xi|}+\frac{\xi_a\xi_b}{4|\xi|^3}\left(\frac{1}{2}S_aS_b-P_{ab}+\frac{1}{3}\mathrm{Ric}_{ab}\right)+o({\bf 1}).
        \end{align}
        After inserting $S_a, P_{ab},Q$ and using $R_{abcd}\xi_a\xi_b=0$, we get the stated result.
\end{proof}

As a direct consequence of the above result and \cite[Lemma~A.1]{Dabrowski23}, we get
\begin{lemma}
\label{symbolsDk}
         The homogeneous symbols $\sigma(|D|^k)=\mathfrak{d}_k+\mathfrak{d}_{k-1}+\mathfrak{d}_{k-2}+\ldots$ read
     \begin{align}
         \mathfrak{d}_k&=\mathfrak{d}^0_k,\\
         \mathfrak{d}_{k-1}&=\mathfrak{d}^0_{k-1} -\frac{k}{2}|\xi|^{k-2} \xi_a\{\gamma^a, B\} +o({\bf x}),
         \\
         \mathfrak{d}_{k-2}&=\mathfrak{d}^0_{k-2}+\frac{k}{2}|\xi|^{k-2}(i\gamma^a B_a +B_0^2)\nonumber \\
         &+\frac{k(k-2)}{4}|\xi|^{k-4} \xi_a\xi_b \Bigl(i\{\gamma^a, B_b\}+\frac{1}{2}\{\gamma^a, B_0\}\{\gamma^b, B_0\}\Bigr)+o({\bf 1}),
     \end{align}
     with
          \begin{align}
         \mathfrak{d}_k^0&=|\xi|^{k-2}\Bigl(\delta_{a,b}+\frac{k}{6}R_{acbd}x^c x^d\Bigr)\xi_a\xi_b +o({\bf x^2}),\\
         \mathfrak{d}_{k-1}^0&=ik|\xi|^{k-2}\xi_a x^b \Bigl(\frac{1}{3}\mathrm{Ric}_{ab} +\frac{1}{8}R_{abjk}\gamma^j\gamma^k\Bigr) +o({\bf x}),\\
         \mathfrak{d}^0_{k-2}&=\frac{k}{8}|\xi|^{k-2} R +\frac{k(k-2)}{12}|\xi|^{k-4}\mathrm{Ric}_{ab}\xi_a\xi_b + o({\bf 1}),
     \end{align}
     the homogeneous symbols of $\sigma(|D_0|^k)=\mathfrak{d}_k^0+\mathfrak{d}^0_{k-1}+\mathfrak{d}^0_{k-2}+\ldots$.
\end{lemma}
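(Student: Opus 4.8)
The plan is to obtain $\sigma(|D|^k)$ by feeding the symbol of $|D|$ from Lemma~\ref{lemma:D} into the iteration of \cite[Lemma~A.1]{Dabrowski23}, exploiting the decisive fact that the principal symbol $\mathfrak{d}_1$ is a \emph{scalar}. Because $\mathfrak{d}_1$ commutes with every endomorphism-valued lower-order symbol, raising to the $k$-th power behaves, at each homogeneous order, like expanding a commuting scalar power: a single insertion of a subleading symbol among the $k$ factors of $\mathfrak{d}_1$ produces an overall combinatorial factor $k$, and two insertions produce a quadratic factor. This is exactly the mechanism that generated the $j$-dependent coefficients in \eqref{symbolsLk}, now with the scalar $\mathfrak{d}_1$ playing the role previously played by $\|\xi\|$.

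Concretely, I would proceed order by order. For the leading symbol, a binomial expansion of the scalar curvature correction gives $(\mathfrak{d}_1)^k=|\xi|^{k-2}\bigl(\delta_{ab}+\tfrac{k}{6}R_{acbd}x^cx^d\bigr)\xi_a\xi_b+o({\bf x^2})$, i.e. $\mathfrak{d}_k=\mathfrak{d}^0_k$. At the next order the linear-in-$\mathfrak{d}_0$ contribution carries the factor $k$; splitting $\mathfrak{d}_0$ into its curvature part (which reproduces $\mathfrak{d}^0_{k-1}$) and its torsion part $-\tfrac{1}{2|\xi|}\xi_a\{\gamma^a,B\}$ yields the displayed correction $-\tfrac{k}{2}|\xi|^{k-2}\xi_a\{\gamma^a,B\}$. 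At order $k-2$ there are two sources: the linear insertion of $\mathfrak{d}_{-1}$, again weighted by $k$, which delivers $\tfrac{k}{2}|\xi|^{k-2}(i\gamma^aB_a+B_0^2)$ together with the scalar $\tfrac{k}{8}|\xi|^{k-2}R$; and the remaining $\xi_a\xi_b$-weighted piece, into which both the quadratic $\mathfrak{d}_0\,\mathfrak{d}_0$ term and the first-order composition (derivative) corrections feed.

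The main obstacle is fixing the coefficient of this last $\xi_a\xi_b$-term. A naive commuting expansion of $(\mathfrak{d}_1+\mathfrak{d}_0)^k$ would suggest the binomial weight $\binom{k}{2}$, whereas the correct weight is $\tfrac{k(k-2)}{4}$; the discrepancy is supplied entirely by the $|\alpha|=1$ terms $\partial_\xi^a\mathfrak{d}_1\,\partial_a(\,\cdot\,)$ of the symbol product, of the same type as the $-\tfrac12(\mathfrak{d}_0)^2$ and $i\partial_\xi^a\mathfrak{d}_1\partial_a\mathfrak{d}_0$ contributions already appearing in the derivation of $\mathfrak{d}_{-1}$ in Lemma~\ref{lemma:D}. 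The cleanest way to pin this factor down, and the cross-check I would run, is to note that $|D|^k=(D^2)^{k/2}=L^{k/2}$ with $L=D^2$ the Laplace-type operator whose invariants are $S_a=i\{\gamma^a,B_0\}$, $P_{ab}=\tfrac23\mathrm{Ric}_{ab}+\tfrac14R_{abjk}\gamma^j\gamma^k+i\{\gamma^a,B_b\}$ and $Q=\tfrac14R+i\gamma^aB_a+B_0^2$. Specialising the symbols \eqref{symbolsLk} of $L^{-j}$ to $j=-k/2$ turns the coefficient $j(j+1)$ into exactly $\tfrac{k(k-2)}{4}$; substituting these invariants and discarding the Riemann contribution via $R_{abcd}\xi_a\xi_b=0$ reproduces both the Ricci part of $\mathfrak{d}^0_{k-2}$ and the torsion correction $\tfrac{k(k-2)}{4}|\xi|^{k-4}\xi_a\xi_b\bigl(i\{\gamma^a,B_b\}+\tfrac12\{\gamma^a,B_0\}\{\gamma^b,B_0\}\bigr)$. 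Collecting the torsionless contributions into $\mathfrak{d}^0_k,\mathfrak{d}^0_{k-1},\mathfrak{d}^0_{k-2}$ and the remainder into the stated $B$-dependent terms then finishes the argument, with consistency at $k=1$ recovering Lemma~\ref{lemma:D}.
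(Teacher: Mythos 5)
Your proposal is correct and takes essentially the same route as the paper: the paper obtains this lemma as ``a direct consequence'' of Lemma~\ref{lemma:D} combined with the iteration lemma \cite[Lemma~A.1]{Dabrowski23}, and your coefficient cross-check via the substitution $k\mapsto -\frac{k}{2}$ in Eq.~\eqref{symbolsLk} is precisely the equivalence the paper itself records in the Remark following the lemma. Only your side claim that the $\binom{k}{2}\to\frac{k(k-2)}{4}$ discrepancy is supplied \emph{entirely} by the $|\alpha|=1$ composition terms is slightly loose --- for the $\{\gamma^a,B_0\}\{\gamma^b,B_0\}$ piece the linear insertion $k\,\mathfrak{d}_1^{k-1}\mathfrak{d}_{-1}$ (whose $-\frac18$ coefficient already encodes composition corrections from Lemma~\ref{lemma:D}) restores the correct weight, while the derivative terms fix the $i\{\gamma^a,B_b\}$ piece --- but nothing in your argument rests on this, since the substitution pins all coefficients.
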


\begin{remark}
    Note that Corollary~A.2 in \cite{Dabrowski23} was derived only for $k\in \mathbb N$, but using it with the $k$-th power of $P^{-1}$ gives the same result as direct use of $-k$, so this Corollary and, consequently, both Eq.~\eqref{symbolsLk} and Lemma~\ref{symbolsDk} hold also for negative integers $k$. Moreover, direct change $k\to -\frac{k}{2}$ in Eq.~\eqref{symbolsLk} leads to the same results in Lemma~\ref{symbolsDk} as following the steps of the proof of Lemma~\ref{lemma:D} and \cite[Corollary~A.2]{Dabrowski23}. Consequently, Proposition~\ref{prop:main} and the Corollaries that follow it are also valid for odd $n$, thus $m$ being half-integers. Therefore, the spectral functionals of the form as in Proposition~\ref{prop:main} defined previously for the even-dimensional case have the same form for odd-dimensional manifolds up to the fiber's dimension. 
\end{remark}

By a straightforward application of the pseudo-differential operator techniques \cite{Gilkey}, we also have the following.
\begin{lemma}\label{lemm:WD40}
     The homogeneous symbols $\sigma\bigl([|D|^k,w_b \gamma^b]\bigr)=\mathfrak{w}_k+\mathfrak{w}_{k-1}+\mathfrak{w}_{k-2}+\ldots$ read
     \begin{align}
         \mathfrak{w}_k&=o({\bf x^2}),\\
         \mathfrak{w}_{k-1}&=\mathfrak{w}_{k-1}^0-\frac{k}{2}|\xi|^{k-2} w_b\bigl[\{\gamma^c, B\},\gamma^b\bigr]+o({\bf x}),
         \\
         \mathfrak{w}_{k-2}&=\mathfrak{w}^0_{k-2}+\frac{k}{2}|\xi|^{k-2}w_b\bigl[i\gamma^c B_c +B_0^2, \gamma^b\bigr] \nonumber \\
         &+\frac{k(k-2)}{4}|\xi|^{k-4} \xi_c \xi_d w_b \Bigl[i\{\gamma^c, B_d\}+\frac{1}{2}\{\gamma^c, B_0\}\{\gamma^d, B_0\},\gamma^b\Bigr]\nonumber \\
         &+i\frac{k}{2}|\xi|^{k-2} \{\gamma^c, B_0\}w_{bc}\gamma^b + i\frac{k(k-2)}{2}|\xi|^{k-4}\xi_c\xi_d \{\gamma^c, B_0\}w_{bd}\gamma^b + o({\bf 1}),
     \end{align}
     with
     \begin{align}
         \mathfrak{w}^0_{k-1}&=i\frac{k}{2}|\xi|^{k-2}\xi_c w_b x^d  R_{cdjb}\gamma^j -ik\xi_c |\xi|^{k-2} w_{bc} \gamma^b + o({\bf x}),\\
         \mathfrak{w}^0_{k-2}&=-kw_{bcc}\gamma^b |\xi|^{k-2}-k(k-2)\xi_c\xi_d |\xi|^{k-4}w_{bcd}\gamma^b +o({\bf 1}).
     \end{align}
 \end{lemma}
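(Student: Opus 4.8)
The plan is to reduce everything to one application of the symbol calculus for a product of pseudodifferential operators \cite{Gilkey}, taking advantage of the fact that $\hat w$, multiplication by $w_b\gamma^b$, is a zeroth-order operator whose symbol $\mathfrak b=\sigma(\hat w)$ is independent of $\xi$. Indeed, in the Leibniz-type expansion $\sigma(AB)\sim\sum_\alpha\frac{(-i)^{|\alpha|}}{\alpha!}\,\partial_\xi^\alpha\sigma(A)\,\partial_x^\alpha\sigma(B)$ all $\xi$-derivatives falling on $\mathfrak b$ vanish, so that $\sigma(\hat w\,|D|^k)=\mathfrak b\,\sigma(|D|^k)$ with no correction terms, whereas $\sigma(|D|^k\hat w)$ keeps the full asymptotic series. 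Subtracting the two yields the single structural identity
\begin{equation}
\sigma\bigl([|D|^k,\hat w]\bigr)=\bigl[\sigma(|D|^k),\mathfrak b\bigr]+\sum_{|\alpha|\ge1}\frac{(-i)^{|\alpha|}}{\alpha!}\,\partial_\xi^\alpha\sigma(|D|^k)\,\partial_x^\alpha\mathfrak b,
\end{equation}
from which the whole lemma follows by substitution and bookkeeping.

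Next I would insert the symbols of $|D|^k$ supplied by Lemma~\ref{symbolsDk} together with the normal-coordinate expansion $\mathfrak b=(w_b+w_{bc}x^c+w_{bcd}x^cx^d)\gamma^b+o({\bf x^2})$, and sort the outcome by the two simultaneous gradings: the homogeneity degree in $\xi$, which separates $\mathfrak w_k$, $\mathfrak w_{k-1}$, $\mathfrak w_{k-2}$, and the polynomial order in $x$, retained up to $o({\bf x^2})$, $o({\bf x})$, $o({\bf 1})$ respectively. At degree $k$ the only term is $[\mathfrak d_k,\mathfrak b]$, which vanishes since $\mathfrak d_k$ is a scalar multiple of the identity, giving $\mathfrak w_k=o({\bf x^2})$. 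At degree $k-1$ the commutator $[\mathfrak d_{k-1},\mathfrak b]$ splits into a torsionless contribution from $\mathfrak d_{k-1}^0$, whose $R_{aejk}\gamma^j\gamma^k$ part produces the curvature term $i\frac{k}{2}|\xi|^{k-2}\xi_c w_b x^d R_{cdjb}\gamma^j$ through $[\gamma^j\gamma^k,\gamma^b]=2\delta^{kb}\gamma^j-2\delta^{jb}\gamma^k$, and the torsion piece $-\frac{k}{2}|\xi|^{k-2}\xi_c w_b[\{\gamma^c,B\},\gamma^b]$; the lone $|\alpha|=1$ term $-i\partial_\xi^c\mathfrak d_k\,\partial_x^c\mathfrak b$ supplies the first-jet term $-ik\xi_c|\xi|^{k-2}w_{bc}\gamma^b$.

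At degree $k-2$ I would assemble three groups: $[\mathfrak d_{k-2},\mathfrak b]$, the $|\alpha|=1$ term $-i\partial_\xi^c\mathfrak d_{k-1}\,\partial_x^c\mathfrak b$, and the single $|\alpha|=2$ term $-\frac{1}{2}\partial_\xi^{cd}\mathfrak d_k\,\partial_x^{cd}\mathfrak b$. The last one, using $\partial_\xi^{cd}|\xi|^k=k|\xi|^{k-2}\delta_{cd}+k(k-2)|\xi|^{k-4}\xi_c\xi_d$, is exactly what generates the two second-jet structures $-kw_{bcc}\gamma^b|\xi|^{k-2}$ and $-k(k-2)\xi_c\xi_d|\xi|^{k-4}w_{bcd}\gamma^b$; the $B$-dependent parts of $\mathfrak d_{k-1}$ and $\mathfrak d_{k-2}$, commuted against $\mathfrak b$, yield the remaining torsion corrections, among them the mixed terms $i\frac{k}{2}|\xi|^{k-2}\{\gamma^c,B_0\}w_{bc}\gamma^b$ and $i\frac{k(k-2)}{2}|\xi|^{k-4}\xi_c\xi_d\{\gamma^c,B_0\}w_{bd}\gamma^b$ coming from $\partial_\xi^c$ of the torsion part of $\mathfrak d_{k-1}$. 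Throughout, the only algebraic inputs needed for the reductions are $\{\gamma^a,\gamma^b\}=2\delta^{ab}$ and the (anti)symmetries of the Riemann tensor, notably $R_{abcd}\xi_a\xi_b=0$.

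I expect the principal difficulty to be organizational rather than conceptual: one must keep the double bookkeeping in $(\deg_\xi,\ \mathrm{ord}_x)$ perfectly consistent, so that each Taylor coefficient $w_{bc}$, $w_{bcd}$ is paired with the correct homogeneous factor and power of $|\xi|$, and one must respect the ordering of the non-scalar matrix factors of $\sigma(|D|^k)$ when commuting them against $\mathfrak b$. Once the terms are collected by these two gradings and the curvature pieces are reduced, the stated closed forms for $\mathfrak w_{k-1}$ and $\mathfrak w_{k-2}$ follow directly.
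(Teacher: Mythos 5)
Your proposal is correct and takes essentially the same route as the paper, which gives no written proof but asserts the lemma as a straightforward application of pseudodifferential calculus: namely the Leibniz product expansion applied to $|D|^k\hat w-\hat w|D|^k$ with the symbols of Lemma~\ref{symbolsDk}, where all $\xi$-derivatives on the zeroth-order factor $\hat w$ drop out --- exactly your structural identity and $(\deg_\xi,\mathrm{ord}_x)$ bookkeeping, and your term-by-term verifications (the curvature reduction via $[\gamma^j\gamma^k,\gamma^b]=2\delta^{kb}\gamma^j-2\delta^{jb}\gamma^k$, the mixed $\{\gamma^c,B_0\}w_{bc}$ terms from $-i\partial_\xi^c\mathfrak{d}_{k-1}\,\partial_c\mathfrak{b}$, and the second-jet terms from $-\tfrac12\partial_\xi^{cd}\mathfrak{d}_k\,\partial_{cd}\mathfrak{b}$) all check out. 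As a minor point in your favor, your torsion term at degree $k-1$ correctly carries the factor $\xi_c$ required by homogeneity, which is evidently a typo in the lemma's displayed $\mathfrak{w}_{k-1}$.
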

     Next, we can formulate the following proposition:
\begin{proposition}\label{prop:f1comm}
    The only terms in $\W(\hat{u}[|D|^k, \hat{w}]|D|^{-n-k+2})$ that contain either derivatives of w or (combinations of ) Riemann tensor but are linear in $w$ read
     \begin{align}
     \nu_{n-1}\frac{k(n+k-2)}{2n}u_a\Bigl(-2\Tr(\I)w_{acc}-i \Tr\bigl([\gamma^a,\gamma^b]\{\gamma^c, B_0\}\bigr)w_{bc} \Bigr) +\ldots.
     \end{align}
\end{proposition}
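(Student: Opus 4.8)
The plan is to read off $\W\bigl(\hat u[|D|^k,\hat w]|D|^{-n-k+2}\bigr)$ from the order-$(-n)$ homogeneous component of the composed symbol, integrated over the cosphere $\|\xi\|=1$ and traced over the fibre, as in \cite{Gilkey}. I would set $C:=[|D|^k,\hat w]$ and $R:=|D|^{-n-k+2}$, take $\sigma(C)$ from Lemma~\ref{lemm:WD40} and $\sigma(R)$ from Lemma~\ref{symbolsDk} with $k\mapsto -n-k+2$ (legitimate by the Remark), and compose via $\sigma(CR)=\sum_\alpha\frac{(-i)^{|\alpha|}}{\alpha!}\partial_\xi^\alpha\sigma(C)\,\partial_x^\alpha\sigma(R)$. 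Since $\hat u=u_a\gamma^a$ is independent of $\xi$, it enters only by left multiplication, $\sigma(\hat u\,CR)=\hat u\,\sigma(CR)$. The leading symbol of $C$ is $\mathfrak w_{k-1}$, of order $k-1$ (as $\mathfrak w_k=o(\mathbf x^2)$), so the product has leading order $1-n$ and the residue density sits exactly one order below the top.

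The central observation is a degree count that isolates which channels reach order $-n$ at the centre of the normal coordinates. The only curvature-carrying pieces of $C$ are the explicit $x$-dependent Riemann terms in $\mathfrak w^0_{k-1}$; but these live in the \emph{left} factor of $CR$, whereas the composition differentiates only the \emph{right} factor in $x$, so they keep an undifferentiated $x$ and vanish at the base point. The remaining curvature is supplied by the $x$-dependence of $R$ (its $R_{acbd}x^cx^d$, Ricci, and scalar pieces). Activating any of these costs one or two $\partial_x$, hence matching $\partial_\xi$ on $C$; tracking homogeneity then shows that every channel coupling $R$'s curvature to a $w$-linear piece of $C$ lands at order $-n-1$ or lower. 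Consequently \emph{no} curvature term linear in $w$ survives into the residue, which is precisely why the stated formula carries no Riemann tensor.

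What remains are the genuine $w$-derivative contributions, which come from a short list of $|\alpha|=0$ products. The $w_{acc}$ term arises solely from $\mathfrak w^0_{k-2}$ against the scalar leading symbol of $R$: its two pieces $-k\,w_{bcc}\gamma^b|\xi|^{k-2}$ and $-k(k-2)\,\xi_c\xi_d|\xi|^{k-4}w_{bcd}\gamma^b$ combine under $\int_{\|\xi\|=1}\xi_c\xi_d=\tfrac1n\delta_{cd}\nu_{n-1}$ into the factor $-k\bigl(1+\tfrac{k-2}{n}\bigr)=-\tfrac{k(n+k-2)}{n}$, and $\Tr(\gamma^a\gamma^b)=\delta^{ab}\Tr\I$ produces the $-2\Tr(\I)w_{acc}$ term with the correct prefactor. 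The $w_{bc}$ term, linear in $B_0$, collects three contributions: the two pieces $\tfrac{ik}{2}\{\gamma^c,B_0\}w_{bc}\gamma^b|\xi|^{k-2}$ and $\tfrac{ik(k-2)}{2}\xi_c\xi_d\{\gamma^c,B_0\}w_{bd}\gamma^b|\xi|^{k-4}$ of $\mathfrak w_{k-2}$ against the leading symbol of $R$, together with the cross term pairing the $-ik\,\xi_c|\xi|^{k-2}w_{bc}\gamma^b$ piece of $\mathfrak w_{k-1}$ with the subleading, $B_0$-carrying symbol $\tfrac{n+k-2}{2}|\xi|^{-n-k}\xi_a\{\gamma^a,B_0\}$ of $R$. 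The same cosphere integral again yields the common prefactor $\tfrac{k(n+k-2)}{2n}$.

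The last step, and the main obstacle, is bookkeeping rather than any new idea: combining these three $w_{bc}$ channels and reducing the resulting $\Tr\bigl(\gamma^a\{\gamma^c,B_0\}\gamma^b\bigr)$-type expressions. Using cyclicity of the trace, the combination reduces to $-\Tr\bigl[[\gamma^a,\gamma^b]\{\gamma^c,B_0\}\bigr]$ contracted with $u_a w_{bc}$, which is exactly the structure in the statement. Assembling the $w_{acc}$ and $w_{bc}$ pieces, and discarding the order-$(-n)$ terms that are neither curvature nor $w$-derivative (the undifferentiated-$w$, $B$-dependent remainder absorbed into $\ldots$), gives the claimed expression. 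The one point demanding genuine care throughout is the consistent power counting that guarantees the curvature terms never reach order $-n$; once that is established, the surviving terms are few and the remaining gamma-algebra is routine.
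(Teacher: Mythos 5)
Your proposal is correct and follows essentially the same route as the paper: composing the symbols from Lemma~\ref{lemm:WD40} and Lemma~\ref{symbolsDk} (with $k\mapsto -n-k+2$), keeping exactly the three channels $\mathfrak w_{k-1}\mathfrak d_{-n-k+1}$, $\mathfrak w_{k-2}\mathfrak d_{-n-k+2}$ and $-i\partial_\xi\mathfrak w_{k-1}\partial_x\mathfrak d_{-n-k+2}$, integrating over the cosphere with $\int_{\|\xi\|=1}\xi_c\xi_d=\tfrac{1}{n}\delta_{cd}\nu_{n-1}$, and using cyclicity of the trace to assemble $-i\Tr\bigl([\gamma^a,\gamma^b]\{\gamma^c,B_0\}\bigr)w_{bc}$; your coefficients $-\tfrac{k(n+k-2)}{n}$ and $\tfrac{k(n+k-2)}{2n}$ agree with the paper's. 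Your explicit power-counting argument for why no curvature term linear in $w$ survives at order $-n$ is a point the paper leaves implicit, but it is sound and does not change the method.
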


 \begin{proof}
         Using lemma~\ref{symbolsDk}, we compute
         \begin{align}
             \sigma_{-n}([|D|^k,\hat{w}]|D|^{-n-k+2})=&\mathfrak w_{k-1}\mathfrak d_{(-n-k+2)-1}+\mathfrak w_{k-2}\mathfrak d_{-n-k+2}-i\partial_\xi^{a}\mathfrak w_{k-1}\partial_a\mathfrak d_{-n-k+2} \nonumber \\
             =&-\frac{ik(n+k-2)}{2}\xi_c\xi_d |\xi|^{-n-2} w_{bc} \gamma^b\{\gamma^d,B_0\}\nonumber \\
             &+i\frac{k}{2}|\xi|^{-n} \{\gamma^c, B_0\}w_{bc}\gamma^b
             \nonumber\\
             &+i\frac{k(k-2)}{2}|\xi|^{-n-2}\xi_c\xi_d \{\gamma^c, B_0\}w_{bd}\gamma^b-kw_{bcc}\gamma^b |\xi|^{-n} \nonumber\\
             &-k(k-2)\xi_c\xi_d |\xi|^{-n-2}w_{bcd}\gamma^b+ \ldots+o({\bf 1}).
         \end{align}
         Again, "$\ldots$" represents terms that do not affect this analysis.

         Next, we integrate this expression over the cosphere $|\xi|=1$,
         \begin{align}
            \frac{1}{\nu_{n-1}}\int_{|\xi|=1}\!\!\!\!\!\sigma_{-n}([|D|^k, \hat{w}]|D|^{-n-k+2})=&-\frac{ik(n+k-2)}{2n} w_{bc} \gamma^b\{\gamma^c,B_0\} \nonumber \\
             &+i\frac{k}{2} \{\gamma^c, B_0\}w_{bc}\gamma^b
             + i\frac{k(k-2)}{2n}\{\gamma^c, B_0\}w_{bc}\gamma^b\nonumber \\
             &-kw_{bcc}\gamma^b -\frac{k(k-2)}{n}w_{bcc}\gamma^b+ \ldots+o({\bf 1})\nonumber \\
             =&\frac{k(n+k-2)}{2n}\bigl[iw_{bc}\left(-\gamma^b\{\gamma^c,B_0\}+\{\gamma^c,B_0\}\gamma^b\right) \nonumber \\ &\hspace{2.3cm}-2w_{bcc}\gamma^b \bigr]+ \ldots+o({\bf 1}).
         \end{align}
Promptly, we have 
\begin{align}
    \Tr\Bigl[ \int_{|\xi|=1}\!\!\!\!\!\!\sigma_{-n}(\hat u[|D|^k, \hat{w}]|D|^{-n-k+2}) \Bigr]=&\Tr\Bigl[u_a\gamma^a\int_{|\xi|=1}\!\!\!\!\!\!\sigma_{-n}([|D|^k, \hat{w}]|D|^{-n-k+2})\Bigr]\nonumber\\
    =&\frac{k(n+k-2)}{2n}u_a\Tr\bigl[iw_{bc}\gamma^a\bigl(-\gamma^b\{\gamma^c,B_0\}\nonumber \\ &+\{\gamma^c,B_0\}\gamma^b\bigr)-2w_{bcc}\gamma^a\gamma^b \bigr]+ \ldots+o({\bf 1})\nonumber\\
    =&\frac{k(n+k-2)}{2n}u_a\Tr\bigl[iw_{bc}\left(-\gamma^a\gamma^b+\gamma^b\gamma^a\right)\{\gamma^c,B_0\}\nonumber \\ &-2w_{acc} \bigr]+ \ldots+o({\bf 1}).
\end{align}
 \end{proof}
Ultimately,, we formulate the main result of this subsection:
     \begin{proposition}\label{prop:f1}
         The density of $F_1(k)=\W\bigl(\hat u |D|^k \hat w |D|^{-n-k+2}\bigr)$ reads
\begin{equation}
\begin{aligned}
 F_1(k)= \nu_{n-1}\biggl[&\frac{2-n}{24}\Tr(\I)u_aw_a R -\frac{k(n+k-2)}{n}\Tr(\I)u_a w_{acc}\\
 &-i\frac{k(k+n-2)}{2n} u_a w_{bc} \Tr\bigl([\gamma^a,\gamma^b]\{\gamma^c,B_0\}\bigr)\biggr] 
 + (\ldots)
\end{aligned}
\end{equation}
     \end{proposition}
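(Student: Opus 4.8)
The plan is to split $F_1(k)$ into a $k$-independent piece and a commutator piece by moving $|D|^k$ past $\hat w$. Using the exact operator identity $|D|^k\hat w=\hat w|D|^k+[|D|^k,\hat w]$ together with $|D|^k|D|^{-n-k+2}=|D|^{-n+2}$, I would write
\[
F_1(k)=\W\bigl(\hat u\hat w\,|D|^{-n+2}\bigr)+\W\bigl(\hat u[|D|^k,\hat w]\,|D|^{-n-k+2}\bigr).
\]
The first summand is manifestly independent of $k$, which already explains why the only $k$-independent contribution in the claimed density is the scalar-curvature term, and the split is an exact identity, so there is no double counting.

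For the first summand I would use that $\hat u\hat w$ is a genuine (matrix-valued) zeroth-order multiplication operator, whose value at the base point in normal coordinates is the endomorphism $E:=u_aw_b\gamma^a\gamma^b$; note that the $w_{bc}x^c$ part of $\hat w$ vanishes at $x=0$, so all derivative dependence of $w$ is relegated to the commutator. Since $|D|^{-n+2}=(D^2)^{-m+1}=L^{-m+1}$ with $L=D^2$ the Laplace-type operator whose data $S_a,P_{ab},Q$ were read off in the proof of Lemma~\ref{lemma:D}, Corollary~\ref{lemma:E} applies directly and gives
\[
\W\bigl(\hat u\hat w\,|D|^{-n+2}\bigr)=\frac{n-2}{24}\nu_{n-1}\int_M\dv\,\Tr\bigl[E(-12Q+6P_{aa}-2R-3S_aS_a)\bigr].
\]
Inserting $Q=\tfrac14R+\ldots$, $P_{aa}=\tfrac23R+\ldots$ (using $\mathrm{Ric}_{aa}=R$ and $R_{aajk}=0$), and the explicit $-2R$, the scalar-curvature coefficients collect as $-3+4-2=-1$; combined with $\Tr(E)=u_aw_a\Tr(\I)$ this produces precisely $\tfrac{2-n}{24}\nu_{n-1}\Tr(\I)u_aw_a R$, while the Ricci-, Riemann-squared, and $B$-dependent pieces are absorbed into $(\ldots)$.

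For the second summand I would simply invoke Proposition~\ref{prop:f1comm}, which already isolates exactly the terms of $\W(\hat u[|D|^k,\hat w]|D|^{-n-k+2})$ that are linear in $w$ and carry either derivatives of $w$ or curvature. Expanding its output, the factor $-2$ times $\tfrac{1}{2n}$ combines to $-\tfrac1n$, yielding $-\tfrac{k(n+k-2)}{n}\Tr(\I)u_aw_{acc}$, and the remaining term gives $-i\tfrac{k(k+n-2)}{2n}u_aw_{bc}\Tr([\gamma^a,\gamma^b]\{\gamma^c,B_0\})$; adding the two summands reproduces the stated density. The genuinely delicate part is the bookkeeping in the first summand: verifying that the scalar-curvature coefficient collects to exactly $-1$ (which requires care with the Clifford trace and the identities $\mathrm{Ric}_{aa}=R$, $R_{aajk}=0$) and confirming that all $w$-derivative and $B_0$-coupling structure is cleanly carried by the commutator term rather than the product term. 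Everything else is a direct substitution into the two already-established results.
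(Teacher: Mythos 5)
Your proof is correct and structurally it is the paper's proof: the same exact split $\W(\hat u|D|^k\hat w|D|^{-n-k+2})=\W(\hat u[|D|^k,\hat w]|D|^{-n-k+2})+\W(\hat u\hat w|D|^{-n+2})$, with the commutator piece handled by invoking Proposition~\ref{prop:f1comm}, exactly as the authors do, and the same coefficient arithmetic $\frac{k(n+k-2)}{2n}\cdot(-2)=-\frac{k(n+k-2)}{n}$. The one place you genuinely diverge is the $k$-independent term: the paper merely notes it carries no derivatives of $w$ and cites the computation of $\W(\hat u\hat w|D_0|^{-n+2})$ in \cite[Section 4]{Dabrowski23}, whereas you re-derive it internally by applying Corollary~\ref{lemma:E} to $L=D^2$ with $E=u_aw_b\gamma^a\gamma^b$ and the data $S_a$, $P_{ab}$, $Q$ extracted in the proof of Lemma~\ref{lemma:D}. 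Your bookkeeping there checks out: since $\sigma(E)$ has no $\xi$-dependence, the composition is exact and only $E(x)$ (not its derivatives, hence not $w_{bc}$) enters the density, so the commutator really does carry all the $w$-derivative structure; and $-12Q\mapsto -3R$, $6P_{aa}\mapsto 4R$ (via $\mathrm{Ric}_{aa}=R$, $R_{aajk}=0$), plus the explicit $-2R$, indeed collect to the coefficient $-1$, giving $\frac{2-n}{24}\nu_{n-1}\Tr(\I)\,u_aw_aR$ with all $B$-dependent and $S_aS_a$ pieces absorbed into $(\ldots)$. This buys a self-contained argument at essentially no extra cost; the only detail you leave implicit is the validity of Corollary~\ref{lemma:E} for odd $n$ (half-integer $m$), which is precisely the content of the Remark following Lemma~\ref{symbolsDk} that the paper invokes for the same purpose, so you should cite it when $n$ is odd.
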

     \begin{proof}
         We use the fact that
\begin{equation}
    \W\bigl(\hat u |D|^k \hat w |D|^{-n-k+2}\bigr)=\W\bigl(\hat u [|D|^k, \hat w] |D|^{-n-k+2}\bigr)+\W\bigl(\hat u \hat w |D|^{-n+2}\bigr).
\end{equation}
         The first term was already computed in Proposition~\ref{prop:f1comm}. The second term cannot contain derivatives of $w$ so all interesting terms originate from $\W\bigl(\hat u \hat w |D_0|^{-n+2}\bigr)$, which is calculated in \cite[Section 4]{Dabrowski23} for even $n$, but is also true in odd dimensions (see discussion below Lemma~\ref{symbolsDk}).
     \end{proof}

\subsection{$F_2(k)$}
We now concentrate on the computation of $F_2(k)=\W\bigl(uD |D|^k w D|D|^{-n-k}\bigr)$. For that, we first need the following lemma:
\begin{lemma}
     The homogeneous symbols $\sigma(D|D|^k)=\mathfrak{e}_{k+1}+\mathfrak{e}_{k}+\mathfrak{e}_{k-1}$ read
     \begin{align}
         \mathfrak{e}_{k+1}=&\mathfrak{e}_{k+1}^0\\
         %-\gamma^a \xi_a |\xi|^k +\frac{1}{6}\gamma^a R_{abcd} x^b x^{\color{blue}c} \xi_d |\xi|^k-\frac{k}{6}\gamma^e \xi_e \xi_a\xi_b R_{acbd}x^c x^d |\xi|^{k-2}+o({\bf x^2}),\\
         \mathfrak{e}_{k}=&\mathfrak{e}_{k}^0+\frac{k}{2}\gamma^c \xi_a\xi_c |\xi|^{k-2}\{\gamma^a, B\}+B|\xi|^k
         %k\gamma^c \xi_a\xi_c |\xi|^{k-2}\Bigl[x^b\bigl(-\frac{i}{3}\mathrm{Ric}_{ab} -\frac{i}{8}R_{abjk} \gamma^j\gamma^k +\frac{1}{2}\{\gamma^a, B_b\}\bigr)+\frac{1}{2}\{\gamma^a, B_0\}\Bigr]\nonumber \\
         %&+\Bigl[\bigl(-\frac{i}{4}\mathrm{Ric}_{ab}\gamma^a +B_b\bigr)x^b +B_0\Bigr]|\xi|^k +\frac{ik}{3}\gamma^c R_{acbd}\xi_a\xi_b |\xi|^{k-2} x^d 
         +o({\bf x}),\\
         \mathfrak{e}_{k-1}=&\mathfrak{e}_{k-1}^0-\gamma^c \xi_c \Bigl[\frac{k}{2}|\xi|^{k-2}\bigl(%\frac{1}{4}R +
         i\gamma^a B_a +B_0^2\bigr)\nonumber\\
         &+\frac{k(k-2)}{4}\xi_a\xi_b |\xi|^{k-4}\bigl(%\frac{1}{3}\mathrm{Ric}_{ab} +
         i\{\gamma^a, B_b\}+\frac{1}{2}\{\gamma^a, B_0\}\{\gamma^b, B_0\}\bigr)\Bigr]\nonumber\\
         &-\frac{k}{2}\xi_a |\xi|^{k-2}B_0 \{\gamma^a, B_0\}-\frac{ik}{2}\gamma^b \xi_a |\xi|^{k-2}\{\gamma^a, B_b\}+o({\bf 1}),
     \end{align}
     with
     \begin{align}
         \mathfrak{e}_{k+1}^0=&-\gamma^a \xi_a |\xi|^k +\frac{1}{6}\gamma^a R_{abcd} x^b x^{c} \xi_d |\xi|^k-\frac{k}{6}\gamma^e \xi_e \xi_a\xi_b R_{acbd}x^c x^d |\xi|^{k-2}+o({\bf x^2}),\\
         \mathfrak{e}_{k}^0=&k\gamma^c \xi_a\xi_c |\xi|^{k-2}x^b\bigl(-\frac{i}{3}\mathrm{Ric}_{ab} -\frac{i}{8}R_{abjk} \gamma^j\gamma^k\bigr)\nonumber \\
         &-\frac{i}{4}\mathrm{Ric}_{ab}\gamma^a x^b |\xi|^k +\frac{ik}{3}\gamma^c R_{acbd}\xi_a\xi_b |\xi|^{k-2} x^d +o({\bf x}),\\
         \mathfrak{e}_{k-1}^0=&-\gamma^c \xi_c \Bigl(\frac{k}{8}|\xi|^{k-2}R+\frac{k(k-2)}{12}\xi_a\xi_b |\xi|^{k-4}\mathrm{Ric}_{ab}\Bigr)\nonumber\\
         &-k\gamma^b \xi_a |\xi|^{k-2}\bigl(
         \frac{1}{3}\mathrm{Ric}_{ab}+\frac{1}{8}R_{abjk}\gamma^j \gamma^k\bigr) +o({\bf 1}).
     \end{align}
\end{lemma}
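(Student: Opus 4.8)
The plan is to obtain $\sigma(D|D|^k)$ directly from the composition formula for pseudodifferential symbols, exploiting that $D$ is a \emph{first-order} differential operator. First I would record the full symbol of $D=D_0+B$ in normal coordinates: reading off Eq.~\eqref{Dirac} and adding the zero-order term $B=B_0+B_c x^c+o(\mathbf{x})$ gives $\sigma(D)=p_1+p_0$, with a degree-one part $p_1=-\gamma^a\xi_a+\frac{1}{6}\gamma^a R_{abcd}x^b x^c\xi_d+o(\mathbf{x}^2)$ and a degree-zero part $p_0=-\frac{i}{4}\gamma^a\mathrm{Ric}_{ab}x^b+B_0+B_c x^c+o(\mathbf{x})$.

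Because $\sigma(D)$ is polynomial of degree one in $\xi$, every derivative $\partial_\xi^\alpha\sigma(D)$ with $|\alpha|\ge 2$ vanishes, so the asymptotic composition series (cf.~\cite{Gilkey}) terminates and is \emph{exact}:
\[
\sigma(D|D|^k)=\sigma(D)\,\sigma(|D|^k)-i\,\partial_{\xi^a}\sigma(D)\,\partial_{x^a}\sigma(|D|^k).
\]
Substituting the homogeneous symbols $\mathfrak{d}_k,\mathfrak{d}_{k-1},\mathfrak{d}_{k-2}$ of $|D|^k$ from Lemma~\ref{symbolsDk} and sorting the right-hand side by homogeneity in $\xi$ produces the three pieces $\mathfrak{e}_{k+1},\mathfrak{e}_k,\mathfrak{e}_{k-1}$. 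Since $\partial_{\xi^a}p_1$ has $\xi$-degree zero while the $x$-derivative preserves $\xi$-homogeneity, the correction first contributes at degree $k$; hence $\mathfrak{e}_{k+1}=p_1\mathfrak{d}_k$ comes purely from the product, $\mathfrak{e}_k=p_1\mathfrak{d}_{k-1}+p_0\mathfrak{d}_k-i\partial_{\xi^a}p_1\,\partial_{x^a}\mathfrak{d}_k$, and $\mathfrak{e}_{k-1}=p_1\mathfrak{d}_{k-2}+p_0\mathfrak{d}_{k-1}-i\partial_{\xi^a}p_1\,\partial_{x^a}\mathfrak{d}_{k-1}$.

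The $B=0$ truncations immediately reproduce $\mathfrak{e}^0_{k+1},\mathfrak{e}^0_k,\mathfrak{e}^0_{k-1}$; here I would use the Riemann symmetries, in particular $R_{abcd}\xi_a\xi_b=0$ together with $R_{acbd}\xi_a\xi_b=R_{adbc}\xi_a\xi_b$, to combine the two curvature contributions arising in $\partial_{x^c}\mathfrak{d}^0_k$ into the single term $\frac{ik}{3}\gamma^c R_{acbd}\xi_a\xi_b|\xi|^{k-2}x^d$. The $B$-dependent pieces are then collected order by order: the term $B|\xi|^k$ in $\mathfrak{e}_k$ originates from $p_0\mathfrak{d}_k$, the $\{\gamma^a,B\}$ structures from $p_1$ acting on the $B$-parts of $\mathfrak{d}_{k-1}$ and $\mathfrak{d}_{k-2}$, while the final term $-\frac{ik}{2}\gamma^b\xi_a|\xi|^{k-2}\{\gamma^a,B_b\}$ of $\mathfrak{e}_{k-1}$ arises from the derivative correction $-i\partial_{\xi^a}p_1\,\partial_{x^a}\mathfrak{d}_{k-1}$ acting on the $x$-linear coefficient $B_c x^c$ of $B$.

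The main obstacle is not conceptual but the bookkeeping in $\mathfrak{e}_{k-1}$, which superposes curvature terms, terms linear in the endomorphisms $B_a$, and terms quadratic in $B_0$, with all factors noncommuting—both in the Clifford algebra and among the bundle endomorphisms. Particular care is needed to preserve operator ordering when multiplying the Clifford factor $-\gamma^c\xi_c$ from $p_1$ against the endomorphism-valued parts of the symbols of $|D|^k$, and to separate cleanly the contributions of $B_0$ and of $B_c$ in both the product and the correction terms; the symmetry identities above are what keep the curvature contributions from proliferating.
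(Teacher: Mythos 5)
Your proposal is correct and follows essentially the same route as the paper, whose proof is precisely the one-line observation that the result follows from composing the symbols of $D=D_0+B$ (read off from Eq.~\eqref{Dirac}) with those of $|D|^k$ from Lemma~\ref{symbolsDk}; you have simply made explicit the finite composition formula $\sigma(D|D|^k)=\sigma(D)\sigma(|D|^k)-i\,\partial_\xi^a\sigma(D)\,\partial_{x^a}\sigma(|D|^k)$, which terminates because $D$ is first-order, and your degree-by-degree bookkeeping (including the curvature recombination via $R_{abcd}\xi_a\xi_b=0$ and the origin of the $\{\gamma^a,B_b\}$ term in the derivative correction) checks out against the stated $\mathfrak{e}_{k+1},\mathfrak{e}_k,\mathfrak{e}_{k-1}$.
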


\begin{proof}
        The result follows from combining the symbols of $D=D_0+B$ (with $D_0$ as in \eqref{Dirac}) and $|D|^k$ (see Lemma~\ref{symbolsDk}).
    
\end{proof}
 \begin{proposition}
     The only terms in $F_2(k)=\W(\hat{u}D|D|^k\hat{w}D|D|^{-n-k})$ that contain either derivatives of w or (combinations of ) Riemann tensor but are linear in $w$ read
     \begin{align}
         \frac{F_2(k)}{\nu_{n-1}}= & \, u_aw_b\Tr(\I)\Bigl(\frac16 G_{ab}+\frac{n-2}{24}R\delta_{ab}\Bigr)+\frac{k(n+k)}{n(n+2)}u_a (n w_{abb}-4 w_{bba})\nonumber \\
         &+i\frac{k(n+k)+n}{2n}u_aw_{bc}\Tr\bigl([\gamma^a,\gamma^b]\{\gamma^c, B_0\}\bigr) +\ldots .
     \end{align}
 \end{proposition}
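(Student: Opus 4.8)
The plan is to proceed exactly as for $F_1(k)$, now feeding in the symbols $\mathfrak{e}_{k+1},\mathfrak{e}_k,\mathfrak{e}_{k-1}$ of $D|D|^k$ supplied by the preceding Lemma, together with the same Lemma under $k\mapsto -n-k$ for $\sigma(D|D|^{-n-k})$. The key simplification is that $D$ commutes with $|D|$, so that $D|D|^k\,D|D|^{-n-k}=|D|^{2-n}$; commuting $\hat w$ to the left through $D|D|^k$ then yields the clean split
\begin{equation}
F_2(k)=\W\bigl(\hat u\,\hat w\,|D|^{2-n}\bigr)+\W\bigl(\hat u\,[D|D|^k,\hat w]\,D|D|^{-n-k}\bigr).
\end{equation}
The first summand is $k$-independent; since $|D|^{2-n}=(D^2)^{-m+1}=L^{-m+1}$ and $E:=\hat u\hat w$ is a zeroth-order endomorphism, it is evaluated directly by Corollary~\ref{lemma:E} with the $D^2$-data $(S_a,P_{ab},Q)$ recorded in the proof of Lemma~\ref{lemma:D}, producing curvature and torsion contributions that are linear in $w$ with no derivative. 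All of the $k$-dependence, and all terms carrying a derivative of $w$, must therefore originate from the commutator summand.

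For the commutator summand I would first compute $\sigma\bigl([D|D|^k,\hat w]\bigr)$ from the symbols $\mathfrak{e}$ and $\hat w=(w_b+w_{bc}x^c)\gamma^b$ through the composition formula: because $\hat w$ is a multiplication operator, the commutator symbol reduces to the pointwise commutator $[\sigma(D|D|^k),\hat w]$ plus the terms pairing $\partial_\xi^\alpha$ of the $\mathfrak{e}$'s with $\partial_x^\alpha\hat w$ for $|\alpha|\ge1$. Composing the result with $\sigma(D|D|^{-n-k})$, I extract the homogeneous component of order $-n$, left-multiply by $\hat u=u_a\gamma^a$, integrate over the cosphere with
\begin{equation}
\int_{|\xi|=1}\xi_a\xi_b=\tfrac{\nu_{n-1}}{n}\delta_{ab},\qquad \int_{|\xi|=1}\xi_a\xi_b\xi_c\xi_d=\tfrac{\nu_{n-1}}{n(n+2)}\bigl(\delta_{ab}\delta_{cd}+\delta_{ac}\delta_{bd}+\delta_{ad}\delta_{bc}\bigr),
\end{equation}
and take the fibre trace. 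The rank-four $\xi$-moment is responsible for the $\tfrac1{n(n+2)}$ prefactor; the characteristic $k(n+k)$ factor arises from pairing a subleading symbol of $D|D|^k$ (coefficient $\propto k$) with one of $D|D|^{-n-k}$ (coefficient $\propto n+k$); the $w_{abb},w_{bba}$ terms come from the $w_{bcc}$ and $\xi_c\xi_d w_{bcd}$ contributions (cf. Lemma~\ref{lemm:WD40}); and the $\Tr\bigl([\gamma^a,\gamma^b]\{\gamma^c,B_0\}\bigr)$ term from the single-$B_0$ coupling.

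The hard part will be the bookkeeping in the commutator summand: one must carry the curvature-bearing subleading symbols $\mathfrak{e}^0_{k+1},\mathfrak{e}^0_k,\mathfrak{e}^0_{k-1}$ and the $B$-bearing ones to precisely the order that contributes at degree $-n$, and then collapse the resulting strings of gamma matrices using $\Tr(\gamma^a\gamma^b)=\delta^{ab}\Tr(\I)$ together with the Clifford relations. A point requiring care is that the commutator also generates curvature terms linear in $w_b$ with no derivative (from the $R_{acbd}x^cx^d$, $R_{abjk}x^b$ and $\mathrm{Ric}$ pieces of the $\mathfrak{e}$'s); these must be combined with the Corollary~\ref{lemma:E} output so that the total curvature contribution reduces to the stated $k$-independent combination $u_aw_b\Tr(\I)\bigl(\tfrac16 G_{ab}+\tfrac{n-2}{24}R\delta_{ab}\bigr)$. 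At every stage I would discard any term that is neither a derivative of $w$ nor a curvature factor, since those are precisely the contributions absorbed into the ``$\ldots$'' of the statement.
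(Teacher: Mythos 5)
Your decomposition $F_2(k)=\W\bigl(\hat u\hat w|D|^{2-n}\bigr)+\W\bigl(\hat u[D|D|^k,\hat w]D|D|^{-n-k}\bigr)$ is algebraically valid (indeed $D|D|^kD|D|^{-n-k}=|D|^{2-n}$), and your plan would work, but it is not quite the paper's route, and the difference matters for how much must be computed. For the derivative-of-$w$ terms the two routes coincide in content: the paper does not form a commutator at all but reads these terms directly off the composition formula for $\sigma_{-n}(\hat uD|D|^k\hat wD|D|^{-n-k})$, namely the pieces $-i\partial_\xi^c\mathfrak{e}_{k+1}\partial_c\hat w\,\mathfrak{e}_{-n-k}$, $-i\partial_\xi^c\mathfrak{e}_{k}\partial_c\hat w\,\mathfrak{e}_{-n-k+1}$, $-\partial_\xi^{cd}\mathfrak{e}_{k+1}\partial_c\hat w\,\partial_d\mathfrak{e}_{-n-k+1}$ and $-\tfrac12\partial_\xi^{cd}\mathfrak{e}_{k+1}\partial_{cd}\hat w\,\mathfrak{e}_{-n-k+1}$, which is exactly your commutator bookkeeping; your identification of the sources of $\tfrac{k(n+k)}{n(n+2)}$ (the rank-four $\xi$-moment on the $w_{bcd}$ term) and of the $\Tr\bigl([\gamma^a,\gamma^b]\{\gamma^c,B_0\}\bigr)$ coupling is correct. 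The genuine divergence is in the non-derivative part. The paper commutes only $|D_0|^k$ past $\hat w$, writing it as $\W\bigl(\hat uD_0[|D_0|^k,\hat w]D_0|D_0|^{-n-k}\bigr)+\W\bigl(\hat uD_0\hat wD_0|D_0|^{-n}\bigr)$, so the entire curvature contribution $\tfrac16G_{ab}+\tfrac{n-2}{24}R\delta_{ab}$ is simply inherited from the Einstein functional already computed in \cite{Dabrowski23}, while the commutator's only curvature-bearing term dies under the trace, $u_aw_bR_{cdjb}\Tr(\gamma^a\gamma^d\gamma^j\gamma^c)\propto u_aw_b(\mathrm{Ric}_{ab}-\mathrm{Ric}_{ab})=0$. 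Your base term instead is $\W(\hat u\hat w|D|^{2-n})$, whose curvature density by Corollary~\ref{lemma:E} is only $-\tfrac{n-2}{24}\nu_{n-1}\Tr(\I)\,u_aw_aR$, since with the $D^2$-data one has $-12Q+6P_{aa}-2R-3S_aS_a=-R+(\text{torsion terms})$; the whole Einstein tensor $\tfrac16G_{ab}$, plus a compensating $\tfrac{n-2}{12}R\delta_{ab}$, must then be extracted from the commutator summand by carrying the curvature-bearing symbols $\mathfrak{e}^0$ through the composition at order $-n$ — essentially re-deriving the Einstein functional rather than citing it. You flagged this bookkeeping yourself, so it is a computational cost rather than a gap; note that you could recover the paper's economy within your scheme by splitting further, $[D|D|^k,\hat w]=D[|D|^k,\hat w]+[D,\hat w]|D|^k$, since $\W\bigl(\hat u[D,\hat w]D|D|^{-n}\bigr)=\W\bigl(\hat uD\hat wD|D|^{-n}\bigr)-\W\bigl(\hat u\hat w|D|^{2-n}\bigr)$ reinstates the known Einstein functional exactly, and the curvature content of the remaining $D[|D|^k,\hat w]$ piece vanishes by the trace identity above.
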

 \begin{proof}
     We begin by noting that
     \begin{align}
         \sigma_{-n}\bigl(\hat{u}D|D|^k \hat{w}D|D|^{-n-k}\bigr)= &\hat{u}\Bigl[\mathfrak{e}_{k+1}\hat{w}\mathfrak{e}_{-n-k-1}+\mathfrak{e}_k \hat{w}\mathfrak{e}_{-n-k} +\mathfrak{e}_{k-1}\hat{w}\mathfrak{e}_{-n-k+1}\nonumber\\
         &-i\partial^c_\xi \mathfrak{e}_{k+1}\partial_c \bigl(\hat{w}\mathfrak{e}_{-n-k} \bigr)-i\partial^c_\xi \mathfrak{e}_k \partial_c \bigl(\hat{w}\mathfrak{e}_{-n-k+1}\bigr)\nonumber\\
         &-\frac{1}{2}\partial_\xi^{cd}\mathfrak{e}_{k+1}\partial_{cd}\bigl(\hat{w}\mathfrak{e}_{-n-k+1}\bigr)
         \Bigr]\nonumber \\         =&\hat{u}\Bigl[\mathfrak{e}_{k+1}^0\hat{w}\mathfrak{e}_{-n-k-1}^0+\mathfrak{e}_k^0 \hat{w}\mathfrak{e}_{-n-k}^0 +\mathfrak{e}_{k-1}^0\hat{w}\mathfrak{e}_{-n-k+1}^0\nonumber\\
         &-i\partial^c_\xi \mathfrak{e}_{k+1}^0\hat{w}\partial_c \mathfrak{e}_{-n-k}^0-i\partial^c_\xi \mathfrak{e}_k^0 \hat{w}\partial_c \mathfrak{e}_{-n-k+1}^0\nonumber\\
         &-\frac{1}{2}\partial_\xi^{cd}\mathfrak{e}_{k+1}^0\hat{w}\partial_{cd}\mathfrak{e}_{-n-k+1}^0\nonumber \\
         &-i\partial^c_\xi \mathfrak{e}_{k+1}\partial_c \hat{w}\mathfrak{e}_{-n-k} -i\partial^c_\xi \mathfrak{e}_k \partial_c \hat{w}\mathfrak{e}_{-n-k+1}\nonumber\\
         &-\partial_\xi^{cd}\mathfrak{e}_{k+1}\partial_{c}\hat{w}\partial_d\mathfrak{e}_{-n-k+1}-\frac{1}{2}\partial_\xi^{cd}\mathfrak{e}_{k+1}\partial_{cd}\hat{w}\mathfrak{e}_{-n-k+1}
         \Bigr]\nonumber\\&+(\ldots),
     \end{align}
     where "$\ldots$" stands for irrelevant terms in our analysis (i.e. with $B$ but without derivatives of $w$).
     
     By a direct inspection, we see that only the very last term could lead to the presence of second derivatives,    
     \begin{align}
         \Tr\int_{|\xi|=1}\!\!\!\!&\hat{u}\Bigl[-\frac{1}{2}\partial_\xi^{cd}\mathfrak{e}_{k+1} \partial_{cd}\hat{w}\mathfrak{e}_{-n-k+1}\Bigr]\nonumber \\ =& \Tr\int_{|\xi|=1}u_a\gamma^a\frac{k}{2}\bigl(\gamma^c \xi_d +\gamma^d \xi_c +\delta_{cd}\xi_j\gamma^j +(k-2)\gamma^j \xi_j\xi_c\xi_d\bigr)\bigl(-2w_{bcd}\gamma^b \gamma^e \xi_e\bigr)\nonumber\\
         =& -\nu_{n-1}\frac{k}{n}u_a\Tr\Bigl[\gamma^a\gamma^c\gamma^b\gamma^d w_{bcd}+\gamma^a\gamma^d\gamma^b\gamma^c+\gamma^a\gamma^e\gamma^b\gamma^e w_{bcc}\nonumber \\
         &+\frac{k-2}{n+2}\bigl(\gamma^a\gamma^c\gamma^b\gamma^d w_{bcd}+\gamma^a\gamma^d\gamma^b\gamma^c w_{bcd}+\gamma^a\gamma^e\gamma^b\gamma^e w_{bcc}\bigr)\Bigr]\nonumber \\
         =&-\nu_{n-1}\frac{k(n+k)}{n(n+2)}u_a\Tr\Bigl[\gamma^a\gamma^c\gamma^b\gamma^d w_{bcd}+\gamma^a\gamma^d\gamma^b\gamma^c+\gamma^a\gamma^e\gamma^b\gamma^e w_{bcc}\Bigr]\nonumber \\
         =&-\nu_{n-1}\frac{k(n+k)}{n(n+2)}\Tr(\I)u_a\Bigl[w_{acc}\bigl(-1-1+2-n\bigr)+w_{bba}\bigl(2+2\bigr)\Bigr].
     \end{align}

     Next, only three terms involve derivatives of $w$ (and the third one vanishes in $x=0$). They read:
     \begin{align}
         \Tr\int_{|\xi|=1}&\hat{u}\Bigl[-i\partial^c_\xi \mathfrak{e}_{k+1}\partial_c \hat{w}\mathfrak{e}_{-n-k} -i\partial^c_\xi \mathfrak{e}_k \partial_c \hat{w}\mathfrak{e}_{-n-k+1}
         -\partial_\xi^{cd}\mathfrak{e}_{k+1}\partial_{c}\hat{w}\partial_d\mathfrak{e}_{-n-k+1}\Bigr]\nonumber\\
         =&\Tr\int_{|\xi|=1}u_a\gamma^a\Bigl[i\bigl(\gamma^c+k\gamma^d\xi_c\xi_d\bigr)w_{bc}\gamma^b\bigl(\frac{-n-k}{2}\gamma^j \xi_j\xi_k\{\gamma^k, B_0\}+B_0\bigr)  \nonumber \\ &+ikB_0\xi_cw_{bc}\gamma^b\gamma^d\xi_d
         +\frac{ik}{2}\bigl(\gamma^j\delta_{cd}\xi_j+\gamma^c\xi_d+(k-2)\gamma^j\xi_c\xi_d\xi_j\bigr)\{\gamma^d,B_0\}w_{bc}\gamma^b\gamma^k\xi_k\Bigr]\nonumber \\
         =&i\nu_{n-1}u_aw_bc\Tr\Bigl\{\gamma^a\gamma^c\gamma^b B_0-\frac{n+k}{2n}\gamma^a\gamma^c\gamma^b\gamma^d\{\gamma^d,B_0\}+\frac{k}{n}\gamma^a\Bigl[ \gamma^c\gamma^b B_0\nonumber \\
         &-\frac{n+k}{2(n+2)}\bigl(\gamma^c\gamma^b\gamma^j\{\gamma^j,B_0\}+ \gamma^d\gamma^b\gamma^c\{\gamma^d,B_0\}+ \gamma^d\gamma^b\gamma^d\{\gamma^c,B_0\}
         \bigr)\nonumber \\
         &+B_0\gamma^b\gamma^c+\frac12\gamma^j\{\gamma^c,B_0\}\gamma^b\gamma^j+\frac12\gamma^c\{\gamma^d,B_0\}\gamma^b\gamma^d\nonumber \\
         &+\frac{k-2}{2(n+2)}\bigl(\gamma^j\{\gamma^c,B_0\}\gamma^b\gamma^j+\gamma^c\{\gamma^d,B_0\}\gamma^b\gamma^d+\gamma^d\{\gamma^d,B_0\}\gamma^b\gamma^c
         \bigr)\Bigr]\Bigr\}
     \end{align}
     We can simplify this expression by collecting similar terms, using  cyclicity of the trace and anticommutation rules of $\gamma$'s to get rid of $\gamma$ matrices with the same indices:
     \begin{align}
         i\nu_{n-1}u_aw_bc\Tr\Biggl\{&\!-\frac{2k(k+n)}{n(n+2)}\gamma^a\gamma^c\gamma^b B_0-\frac{(n+k)(n+k+2)}{n(n+2)}\bigl(-\gamma^a\gamma^b\gamma^cB_0+\gamma^c\gamma^b\gamma^aB_0\bigr)\nonumber \\
         &+\frac{k}{n}\Bigl[
         -\frac{(n+k)(2-n)}{2(n+2)} \gamma^a\gamma^b\{\gamma^c,B_0\}+\frac{(n+k)(2-n)}{2(n+2)}\gamma^b\gamma^a\{\gamma^c,B_0\} \nonumber
         \\         &+\frac{2(n+k)}{n+2}\gamma^b\gamma^c\gamma^aB_0+\frac{k-2}{n+2}\bigl(-\gamma^b\gamma^a\gamma^cB_0+\gamma^c\gamma^a\gamma^bB_0
         \bigr)\Bigr]\Biggr\}.
     \end{align}
     We then want to express every term as a part with $\{\gamma^c,B_0\}+(\ldots)$, leading to
     \begin{align}
         &\frac{i}{n(n+2)}\nu_{n-1}u_aw_bc\Tr\Biggl[k(k+n)\Bigl(\gamma^a\gamma^b\{\gamma^c,B_0\}-2\delta_{ac}\gamma^bB_0-2\delta_{bc}\gamma^aB_0\Bigl)\nonumber\\
         &+\frac12(n+k)(n+k+2)\Bigl(\gamma^a \gamma^b\{\gamma^c,B_0\}-\gamma^b \gamma^a\{\gamma^c,B_0\} \Bigr)\nonumber\\
         &+\frac12k(n+k)(2-n)\Bigl( -\gamma^a\gamma^b\{\gamma^c,B_0\}+\gamma^b\gamma^a\{\gamma^c,B_0\}\Bigr)\nonumber\\
         &+k(k+n)\Bigl(-\gamma^b \gamma^a\{\gamma^c,B_0\}+2\delta_{ac}\gamma^bB_0+2\delta_{bc}\gamma^aB_0\Bigl)\nonumber\\
         &+\frac12k(k-2)\Bigl(\gamma^a \gamma^b\{\gamma^c,B_0\}-\gamma^b \gamma^a\{\gamma^c,B_0\} \Bigr)
         \Biggr]\nonumber\\
         =&\frac{i}{n(n+2)}\nu_{n-1}u_aw_bc\Tr\Biggl([\gamma^a,\gamma^b]\{\gamma^c,B_0\}\Bigl[k(n+k)\nonumber\\
         &+ \frac12(n+k)(n+k+2)+\frac12k(n+k)(n-2)+\frac12k(k-2)\Bigr]
         \Biggr)\nonumber\\
         =&i\frac{k(n+k)+n}{2n}\nu_{n-1}u_aw_bc\Tr\Bigl([\gamma^a,\gamma^b]\{\gamma^c,B_0\}\Bigl).
     \end{align}
     We also need to calculate terms without derivatives of $w$. The procedure is similar to the proof of Proposition~\ref{prop:f1}. We split the part without $B$ into:
    \begin{align}
         \W\bigl(\hat u D_0|D_0|^k \hat w D_0|D_0|^{-n-k}\bigr)= & \W\bigl(\hat u D_0 [|D_0|^k, \hat w] D_0|D_0|^{-n-k}\bigr)\nonumber\\
         &+\W\bigl(\hat u D_0 \hat w D_0|D_0|^{-n}\bigr).
\end{align}
The second term is calculated in \cite[Section 4]{Dabrowski23} for even $n$, but again, the result is valid also for odd $n$ (see discussion below Lemma~\ref{symbolsDk}). On the other hand, to compute the first term we need the symbol of $\hat u D_0 [|D_0|^k, \hat w]$ or rather the part of it without derivatives of $w$, as they are already handled. By Lemma~\ref{lemm:WD40} we see that the symbol of $[|D|^k, \hat w]$ has only one term without $B$ or derivatives of $w$:
        \begin{equation}
         \mathfrak{w}^0_{k-1}=\frac{ik}{2}\xi_c|\xi|^{k-2}w_bx^d R_{cdjb}\gamma^j+\ldots+o({\bf x}),
\end{equation}   \
and it is easy to check that
         \begin{align}
         \sigma\bigl(\hat u D_0 [|D_0|^k, \hat w]\bigr)
         =-\frac{k}{2}\xi_c|\xi|^{k-2}u_aw_b\gamma^a \gamma^dR_{cdjb}\gamma^j+\ldots+o({\bf 1}).
         \end{align}
We then have
         \begin{align}
             \Tr&\int_{|\xi|=1}\!\!\!\!\sigma_{-n}\bigl(\hat u D_0 [|D_0|^k, \hat w]D_0|D_0|^{-n-k}\bigr)\nonumber\\
             =&\Tr\int_{|\xi|=1}\!\!\!\!-\frac{k}{2}\xi_c|\xi|^{k-2}u_aw_b\gamma^a \gamma^dR_{cdjb}\gamma^j\bigl(-\gamma^k\xi_k|\xi|^{-n-k}\bigr)\nonumber\\
             =&\frac{k}{2n}\nu_{n-1}u_aw_bR_{cdjb}\Tr\bigl(\gamma^a\gamma^d\gamma^j\gamma^c\bigr)=\frac{k}{2n}\nu_{n-1}\Tr\bigl(\I\bigr)u_aw_b\bigl(\mathrm{Ric}_{ab}-\mathrm{Ric}_{ab}\bigr)=0,
         \end{align}
         and finally
         \begin{align}
             \W\bigl(\hat u D_0|D_0|^k \hat w D_0|D_0|^{-n-k}\bigr)&=\W\bigl(\hat u D_0 \hat w D_0|D_0|^{-n}\bigr)+\ldots\nonumber \\
             &=\nu_{n-1}u_aw_b\Tr(\I)\Bigl(\frac16 G_{ab}+\frac{n-2}{24}R\delta_{ab}\Bigr)+\ldots.
         \end{align}
 \end{proof}
 \bibliography{apssamp}